\newtheorem{thm}{Theorem}[section]
\theoremstyle{definition}
\newtheorem{rem}{Remark}[section]
\newtheorem{example}{Example}
\numberwithin{equation}{section}
\title{\bf Two gesture-computing approaches by using electromagnetic waves}
\author{Yukun  Guo\thanks{Department of Mathematics,
 Harbin Institute of Technology, Harbin, P. R. China. Email: {\tt ykguo@hit.edu.cn}},
\and Jingzhi Li\thanks{Department of Mathematics, Southern University of Science and Technology, Shenzhen, P. R. China. Email: {\tt li.jz@sustc.edu.cn}},
\and Hongyu Liu\thanks{Department of Mathematics, Hong Kong Baptist University, Kowloon, Hong Kong SAR, P. R. China.  Email:  {\tt hongyuliu@hkbu.edu.hk}}
\and Xianchao Wang\thanks{Department of Mathematics, Harbin Institute of Technology, Harbin, P. R. China. Email: {\tt xcwang90@gmail.com}}}
\date{} 
\begin{document}
\maketitle

\begin{abstract}

 We are concerned with a novel sensor-based gesture input/instruction technology which enables human beings to interact with computers conveniently. The human being wears an emitter on the finger or holds a digital pen that generates a time harmonic point charge. The inputs/instructions are performed through moving the finger or the digital pen. The computer recognizes the instruction by determining the motion trajectory of the dynamic point charge from the collected electromagnetic field measurement data. The identification process is mathematically modelled as a dynamic inverse source problem for time-dependent Maxwell's equations. From a practical point of view, the point source should be assumed to move in an unknown inhomogeneous background medium, which models the human body and the surroundings. Moreover, a salient feature is that the electromagnetic radiated data are only collected in a limited aperture. For the inverse problem, we develop, from the respectively deterministic and stochastic viewpoints,  a dynamic direct sampling method and a modified particle filter method. Both approaches can effectively recover the motion trajectory. Rigorous theoretical justifications are presented for the mathematical modelling and the proposed recovery methods. Extensive numerical experiments are conducted to illustrate the promising features of the two proposed recognition approaches.

\medskip

\noindent{\bf Keywords:}~~Gesture recognition; electromagnetic wave; dynamic inverse source problem; direct sampling; particle filter

\noindent{\bf 2010 Mathematics Subject Classification:}~~35R30, 35P25, 78A46

\end{abstract}

\section{Introduction}\label{sect:1}

Among various human-computer interaction technologies, most people prefer to interact with computers in a more personal way, e.g. by using voice, touch and gesture rather than a mouse or a keyboard. In particular, people take an interest in gesture-computing technology since it enables people to communicate with the machine and interact more naturally without the help of any mechanical devices. Some existing technologies have been developed using cameras to capture the human body gesture, and then using imaging processing algorithms to interpret the body language in order to understand the instructions or inputs; see \cite{Wiki} and the references therein. We also refer the interested readers to a recent article by Liu et al. \cite{LWY}, where they first attempted to achieve gesture recognition by using inverse scattering techniques. In fact in \cite{LWY}, instead of using of a camera, it is proposed that one uses wave probing to identify the body gesture. However, the proposed method in \cite{LWY} is mainly suitable for static instructions for computers, but not suitable for dynamic text inputs. In order to develop a novel gesture-computing method that is suitable for both instruction and input, Guo et al. \cite{GLLW} propose to use a moving emitter generating an acoustic point wave and then to identify the motion trajectory of the emitter that carries the intended instruction/input. The purpose of the present article is to further develop the idea in \cite{GLLW} to produce much more practical, effective and efficient gesture-computing methods.

Fig.~\ref{fig.1} provides a schematic illustration of the proposed gesture-computing technology. In order to give the specific input or instruction to the computer, a human being wears an emitter on his/her finger or holds a digital pen that generates a time harmonic point charge. The input/instruction is performed through moving the finger or the digital pen. There are sensors installed on the computer that timely collect the electromagnetic wave data generated by the motion of the point charge. Then the computer recognizes the instruction/input by determining the motion trajectory of the dynamic point charge from the collected electromagnetic field data. Mathematically, the motion trajectory identification can be modelled as an inverse problem where one intends to identify a moving emitter from the measurement of the electromagnetic wave fields generated by the emitter.
\begin{figure}
  \centering
   \includegraphics[  width=0.6\textwidth]{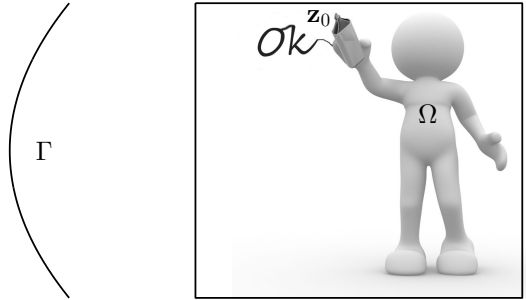}
   \caption{Schematic illustration of the proposed input/instruction technology using a moving emitter. }\label{fig.1}
\end{figure}
There are several practically important issues that should be incorporated into the design and modelling. First, the choice of the point charge is a critical ingredient in our design. The physical principle of using a moving charge to generate electromagnetic fields can be found in \cite{Griffiths}. We choose to utilize a time-harmonic point charge for two considerations. On the one hand, the time-harmonic point charge will generate electromagnetic waves with a fix frequency. This enables us to distinguish the signals due to the motion of the emitter from the possibly various background signals such as those from radios, telephones and televisions. On the other hand, in a certain practical scenario, we can show that by properly choosing the frequency of the waves, one can eliminate the scattering influence from certain inhomogeneous background scatterers such as the body of the human being who performs the input/instruction. Second, the emitter should be modelled as moving within an unknown inhomogeneous background medium as described above, and the trajectory identification should be independent of the background medium. Third, the measurement data should be collected only in a limited aperture on a surface. Indeed, as described earlier, the sensors used to collect the wave data are installed on the computer. Hence, one would only have limited-view data for the inverse problem. Moreover, we shall see in the subsequent study that the location of the measurement surface is also an important ingredient in our design. Finally, the trajectory identification should be conducted in a timely manner. All of those challenging issues distinguish our study from the existing ones in the literature on inverse source problems associated with electromagnetic wave probings. We are aware of some existing theoretical and computational developments on the identification of an unknown source from the measurement of the generated electromagnetic wave data away from the source. Those problems arise in various practical applications; see e.g. \cite{ATW, 2BaoG,DLU,HR,1Mon,10Fokas,Isa2,Yam2} and the references therein. Finally, as mentioned earlier that in \cite{GLLW}, the similar input/instruction technology has been proposed using a moving acoustic emitter. Clearly, using electromagnetic waves for the proposed technology is more realistic and practical. Indeed, according to the mathematical and theoretical analysis in \cite{GLLW}, the applicability of the proposed technology is limited if acoustic wave is used due to the low-speed of propagation. The use of electromagnetic wave shall overcome this problematic issue. Nevertheless, we would like to emphasize that our study is mainly concerned with the conceptual design and theoretical analysis, and the issue on engineering realization of the proposed technology is beyond the scope of this paper.

For the dynamic inverse electromagnetic problem described above, we develop two methods for the trajectory identification: one is a direct imaging method motivated by our theoretical analysis and the other one is a modified particle filter method motived by the particle filter methods developed in \cite{CB,KS,Potthast} for various dynamic inverse problems. Both methods are shown to be effective and efficient by extensive numerical examples. Finally, we also note some existing results on recovering moving targets in different contexts \cite{Amm, GF, 24Naka}.

The rest of the paper is organized as follows. Section 2 is devoted to the mathematical modelling of the moving-emitter-based gesture recognition technique as well as some necessary theoretical analysis. In Section 3, we develop two recovery methods for the trajectory identification. In Section 4, we conduct numerical experiments to illustrate the proposed instruction/input technology as well as the effectiveness and efficiency of the methods developed for the identification.

\section{Mathematical modelling and theoretical analysis}\label{sect:modelling}

In this section, we build up the mathematical modelling and present some theoretical analysis for the proposed sensor-based gesture recognition technology.

\subsection{Mathematical modelling}\label{sect:model}

Assume that the point charge emanates a causal sinusoidal signal
\begin{equation}\label{eq:charge}
q(t):=\begin{cases}
\sin(\omega_0 t), &\quad  t\geq 0,\\
0, &\quad t< 0,
\end{cases}
\end{equation}
where $t$ denotes the  temporal variable, and $\omega_0\in\mathbb{R}_+$ denotes the frequency. Suppose that the electric charge density $\rho$ has the following form
\begin{equation}\label{eq:point wave}
\displaystyle \rho(\bm{x},t)=q(t)\, \delta(\bm{x}-\bm{z}(t)), \quad (\bm x, t)\in\mathbb{R}^3\times\mathbb{R}_+,
\end{equation}
where $\bm{x}$ denotes the spatial variable, and $\bm{z}: \mathbb{R}_{+}\mapsto\mathbb{R}^3$ is a $C^2$ smooth function that signifies the instantaneous position of the point charge at time $t$ and $\delta$ is the Dirac's delta distribution.
 Let the moving trajectory of the point charge be
 \begin{equation}\label{trajectory}
   \Lambda_{\bm{z}}:=\{\bm{z}(t)\mid 0<t\leq T\},
\end{equation}
where $T\in \mathbb{R}_+$ is the terminal time of the motion.
 As the point charge moves, the electric current density satisfies
 \begin{equation}\label{eq:ecd}
  \bm{J}(\bm{x},t)=\rho(\bm{x},t)\bm{v}(t),
 \end{equation}
 where $\bm{v}(t)={\mathrm{d}\bm{z}(t)}/{\mathrm{d}t}$ is the instantaneous  velocity.

Suppose a human being is interacting with the gesture computing device, his/her body could be modelled by a bounded  moving domain   $\widetilde{\Omega}(t)\subset \mathbb{R}^3 (0<t\leq T)$ such that its boundary is time-varying. In what follows, we set
\begin{equation*}
\Omega:=\bigcup\limits_{ t\in(0,T]} \widetilde{\Omega}(t).
\end{equation*}
The receivers are fixed and located on a static open surface $\Gamma \subset \mathbb{R}^3 \backslash \Omega$ such that $\Gamma \cap \Lambda_{\bm{z}}=\emptyset$;
see Figure \ref{fig.1} for the schematic illustration of the problem setting. We set $\epsilon(\bm x, t)\in L^\infty(\mathbb{R}^3\times[0, T])$ and $\mu(\bm x, t)\in L^\infty(\mathbb{R}^3\times [0, T])$ be positive functions to, respectively, signify the electric permittivity and magnetic permeability at the space point $\bm x$ and time point $t$; and $\sigma(\bm x, t)\in L^\infty(\mathbb{R}^3\times [0, T])$ be a non-negative function to signify the electric conductivity at $(\bm x, t)$. We assume that the medium is homogeneous in the background space $\mathbb{R}^3\backslash \widetilde{\Omega}(t)$, that is,
\begin{equation*}
  \epsilon(\bm x,t)=\epsilon_0,\quad \mu(\bm x,t)=\mu_0, \quad   \sigma(\bm x,t)=0,\quad  \mathrm{in}\ \ \  \mathbb{R}^3\backslash \widetilde{\Omega}(t),
\end{equation*}
where $\epsilon_0\in\mathbb{R}_+$ and $\mu_0\in\mathbb{R}_+$ are, respectively, the constant permittivity and permeability of the background space. Biologically, it is reasonable to assume that $\mu(\bm x,t)=\mu_0$ in $\widetilde{\Omega}(t)$.

In the setup described above, the electric current $\bm{J}(\bm x, t)$ generates electromagnetic waves that propagate in the space. We denote by $\bm{E}(\bm{x},t)$  and $\bm{H}(\bm{x},t)$  the electric field and the magnetic field in $ \mathbb{R}^3\times (0,T]$, respectively. The electromagnetic field satisfies the following Maxwell system,
\begin{equation}\label{eq:main1}
  \begin{cases}
  \displaystyle\nabla \times \bm{E} =-\mu_0\,\frac{\partial \bm{H}}{\partial t},\smallskip    \\
  \displaystyle\nabla \times \bm{H} = \bm{J}+\sigma\, \bm{E}+\epsilon\,\frac{\partial \bm{E}}{\partial t}, \smallskip \\
  \end{cases}
\end{equation}
and the initial condition
\begin{equation}\label{eq:initial1}
     \bm{E}|_{t=0} = 0, \quad \bm{H}|_{t=0} = 0.
\end{equation}
The Maxwell system \eqref{eq:main1}--\eqref{eq:initial1} is well understood, and we refer to \cite{Leis} for the study on its well-posedness and especially the unique existence of a pair of solutions $(\bm{E},\bm{H})\in C^1(H_{loc}(\mathrm{curl},\mathbb{R}^3),(0,T])\times C^1(H_{loc}(\mathrm{curl},\mathbb{R}^3),(0,T])$. The trajectory identification associated with the prosed input/instruction technology can be formulated as follows,
\begin{equation}\label{inversion}
\bm{E}\big|_{\Gamma\times(0,T]}\longrightarrow \Lambda_{\bm{z}}.
\end{equation}
That is, by monitoring the change of the electromagnetic field on the surface $\Gamma$ generated by the emitter, we intend to recover the motion trajectory of the emitter. Here, we would like to emphasize that the recovery should be independent of the inhomogeneity $(\Omega;\epsilon,\mu,\sigma)$.

\subsection{A theoretical model approximation}\label{sect:approx}

As emphasized at the end of Section~\ref{sect:model}, the recovery for the inverse problem \eqref{inversion} should be independent of the background inhomogeneity $(\Omega;\epsilon,\mu,\sigma)$. Clearly, in the measurement data $\bm{E}\big|_{\Gamma\times(0,T]}$, there are both scattering data due to the moving emitter and the background inhomogeneity that are coupled together. Furthermore, the background inhomogeneity could be changed due to the change of the human being that performs the input or instruction. Next, we present some practical conditions that the scattering influence due to the background inhomogeneity can be eliminated. To that end, we introduce the following Maxwell system for $(\bm{E}_0,\bm{H}_0)\in C^1(H_{loc}(\mathrm{curl},\mathbb{R}^3),(0,T])\times C^1(H_{loc}(\mathrm{curl},\mathbb{R}^3),(0,T])$,
\begin{equation}\label{eq:MaxwellFree}
\begin{cases}
 \displaystyle\nabla \times \bm{E}_0= -\mu_0\,\frac{\partial \bm{H}_0}{\partial t},\smallskip   \\
  \displaystyle\nabla \times \bm{H}_0 =  \bm{J}+\epsilon_0\,\frac{\partial \bm{E}_0}{\partial t},  \smallskip \\
 \bm{E}_0|_{t=0} = 0, \quad \bm{H}_0|_{t=0} = 0,
\end{cases}
\end{equation}
where $\bm{J}$ is given in \eqref{eq:ecd}. The system \eqref{eq:MaxwellFree} describes the electromagnetic wave propagation generated by the moving emitter in the free space without any inhomogeneity presented. In the following, we shall show that under certain practical conditions, the difference between the two fields, $\bm{E}$ in \eqref{eq:main1} and $\bm{E}_0$ in \eqref{eq:MaxwellFree} can be small; that is
\begin{equation}\label{eq:diff1}
\|\bm{E}-\bm{E}_0\|_{L^\infty(\Gamma\times(0, T])^3}\ll 1.
\end{equation}
If \eqref{eq:diff1} holds true, then clearly the inverse problem \eqref{invention} can be approximately replaced by the following one
\begin{equation}\label{invention2}
\bm{E}_0\big|_{\Gamma\times(0,T]}\longrightarrow \Lambda_{\bm{z}}.
\end{equation}
Here, we note that in \eqref{invention2}, there is no inhomogeneity presented. Next, we shall show that \eqref{eq:diff1} can indeed hold under certain conditions.

By \eqref{eq:charge}, \eqref{eq:point wave} and \eqref{eq:ecd}, we know that the electromagnetic field is time-harmonic with frequency $\omega_0$. Hence, we can assume that  (cf. \cite{Ned})
\begin{equation}\label{eq:th1}
\bm{E}(\bm x, t)=\Re(E(\bm x)e^{-\mathrm{i}\omega_0 t}),\quad \bm{H}(\bm x, t)=\Re(H(\bm x)e^{-\mathrm{i}\omega_0 t}),
\end{equation}
and
\begin{equation}\label{eq:th2}
\bm{E}_0(\bm x, t)=\Re(E_0(\bm x)e^{-\mathrm{i}\omega_0 t}),\quad \bm{H}_0(\bm x, t)=\Re(H_0(\bm x)e^{-\mathrm{i}\omega_0 t}).
\end{equation}
Furthermore, in order to show \eqref{eq:diff1}, we could only consider a fixed instant, say $t_0$. In the following, at $t_0$, we still use the notations $\epsilon(\bm x), \mu(\bm x), \sigma(\bm x)$ and $\Omega$ to denote the optical parameters of the inhomogeneity and its support respectively, and this should be clear from the context. By plugging \eqref{eq:th1} into \eqref{eq:main1}, one then has
\begin{equation}\label{eq:th3}
\begin{cases}
\nabla\times E-\mathrm{i}\omega_0\mu_0 H=0 \hspace*{1.2cm}  \mbox{in}\ \ \mathbb{R}^3,\smallskip\\
\nabla\times H+\mathrm{i}\omega_0\epsilon E=J+\sigma E\hspace*{0.4cm} \mbox{in}\ \ \mathbb{R}^3.
\end{cases}
\end{equation}
where $J(\bm x)$ signifies the electric current density at the instant $t_0$. It can be shown that $J(\bm x)$ takes the following form
\begin{equation}\label{eq:th4}
J(\bm x)=p_0 \delta(\bm x-\bm{z}_0).
\end{equation}

It is known that $E$ and $H$ in \eqref{eq:th3} satisfy the following so-called Silver-M\"uller radiation condition (cf. \cite{CK,Ned})
\begin{equation}\label{eq:silver}
\lim_{|\bm x|\rightarrow+\infty} \left(H\times\bm{x}-|\bm x| E\right)=0.
\end{equation}
Similarly, by plugging \eqref{eq:th2} into \eqref{eq:MaxwellFree}, one has
\begin{equation}\label{eq:th5}
\begin{cases}
\nabla\times E_0-\mathrm{i}\omega_0\mu_0 H_0=0 \hspace*{.5cm} \mbox{in}\ \ \mathbb{R}^3,\smallskip\\
\nabla\times H_0+\mathrm{i}\omega_0\epsilon_0 E_0=J \hspace*{.5cm} \mbox{in}\ \ \mathbb{R}^3,\smallskip\\
\displaystyle{\lim_{|\bm x|\rightarrow+\infty} \left(H_0\times\bm{x}-|\bm x| E_0\right)=0.}
\end{cases}
\end{equation}

We can show that
\begin{thm}\label{thm:main}
Let $(E, H)\in H_{loc}^1(\mathbb{R}^3)\times H_{loc}^1(\mathbb{R}^3)$ and $(E_0, H_0)\in H_{loc}^1(\mathbb{R}^3)\times H_{loc}^1(\mathbb{R}^3)$ be respectively the solutions to the Maxwell systems \eqref{eq:th3} and \eqref{eq:th5}. Then we have the following results.
\begin{enumerate}
\item[(i)] Assume that $\epsilon, \sigma\in C^{0,1}(\overline{\Omega})$. If $L:=\mathrm{dist}(\Gamma,\Omega)\gg 1$ and $L\omega_0\sim 1$, then
\begin{equation}\label{eq:ineq1}
\|E-E_0\|_{L^\infty(\Gamma)}\leq \alpha_0/L^2,
\end{equation}
where $\alpha_0$ is a positive constant depending only on $p_0, \bm{z}_0$ and $\mu_0$, $\epsilon_0$, $\|\epsilon\|_{C^{0,1}(\overline{\Omega})}$, $\|\sigma\|_{C^{0,1}(\overline{\Omega})}$.

\item[(ii)] Assume that $\epsilon, \sigma$ are constant in $\Omega$. If $\bm{z}_0$ is lying outside $\Omega$ and $\omega_0/c_0\ll 1$, then
\begin{equation}\label{eq:ineq2}
\|E-E_0\|_{L^\infty(\Gamma)}\leq \frac{\beta_0\omega_0}{c_0},
\end{equation}
where $c_0=1/\sqrt{\mu_0\epsilon_0}$ and $\beta_0$ is a positive constant depending only on $p_0, L$ and $\mu_0, \epsilon, \sigma$.
\end{enumerate}
\end{thm}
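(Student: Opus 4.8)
The plan is to work with the difference fields $\widetilde{E}:=E-E_0$ and $\widetilde{H}:=H-H_0$. Subtracting \eqref{eq:th5} from \eqref{eq:th3} and using that $\epsilon=\epsilon_0$ and $\sigma=0$ outside $\Omega$, one sees that $(\widetilde E,\widetilde H)$ solves the constant-coefficient (free-space) Maxwell system
\begin{equation*}
\nabla\times\widetilde E-\mathrm{i}\omega_0\mu_0\widetilde H=0,\qquad \nabla\times\widetilde H+\mathrm{i}\omega_0\epsilon_0\widetilde E=\widetilde J\quad\text{in }\mathbb{R}^3,
\end{equation*}
subject to the Silver--M\"uller radiation condition, where the effective source
\begin{equation*}
\widetilde J:=\big[\sigma-\mathrm{i}\omega_0(\epsilon-\epsilon_0)\big]E
\end{equation*}
is supported in $\Omega$. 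Eliminating $\widetilde H$ gives $\nabla\times\nabla\times\widetilde E-k_0^2\widetilde E=\mathrm{i}\omega_0\mu_0\widetilde J$ with $k_0:=\omega_0/c_0$, so I would represent the radiating solution on $\Gamma$ through the dyadic Green's function,
\begin{equation*}
\widetilde E(\bm x)=\mathrm{i}\omega_0\mu_0\int_\Omega\mathbb{G}(\bm x,\bm y)\,\widetilde J(\bm y)\,\mathrm{d}\bm y,\qquad \mathbb{G}=\big(\mathbb{I}+k_0^{-2}\nabla\nabla\big)g,\quad g(\bm x,\bm y)=\frac{e^{\mathrm{i}k_0|\bm x-\bm y|}}{4\pi|\bm x-\bm y|}.
\end{equation*}
Since $\mathrm{dist}(\Gamma,\Omega)=L>0$, the kernel is smooth on $\Gamma\times\Omega$ and every derivative of $g$ is controlled by the separation $|\bm x-\bm y|\ge L$.

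For part (i), the first task is an a priori bound $\|E\|_{L^1(\Omega)}\le C$ with $C$ depending only on $p_0,\bm z_0$ and $\mu_0,\epsilon_0,\|\epsilon\|_{C^{0,1}(\overline\Omega)},\|\sigma\|_{C^{0,1}(\overline\Omega)}$; this is where the Lipschitz regularity of the coefficients enters, via the well-posedness and interior estimates for \eqref{eq:th3} (recall the point source sits away from $\Omega$). Granting this, I would estimate $\widetilde E$ on $\Gamma$ crudely by $\|\widetilde E\|_{L^\infty(\Gamma)}\le\omega_0\mu_0\,\|\mathbb{G}\|_{L^\infty(\Gamma\times\Omega)}\,\|\widetilde J\|_{L^1(\Omega)}$. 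Inspecting $\mathbb{G}$, each of its three homogeneity pieces behaves like $L^{-1}$ once the scaling $k_0L=\omega_0L/c_0\sim1$ is used (in particular the apparently dangerous $k_0^{-2}|\bm x-\bm y|^{-3}$ piece becomes $\sim c_0^2/L$). Combining $\|\mathbb{G}\|_{L^\infty(\Gamma\times\Omega)}\lesssim L^{-1}$, the prefactor $\omega_0\sim L^{-1}$, and $\|\widetilde J\|_{L^1(\Omega)}\lesssim C$ then yields the claimed bound $\|E-E_0\|_{L^\infty(\Gamma)}\le\alpha_0 L^{-2}$.

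Part (ii) is the delicate one, and is exactly where I expect the main obstacle. Now $L$ is fixed while $k_0=\omega_0/c_0\to0$, so the crude estimate above is useless: the factor $k_0^{-2}$ in $\mathbb{G}$ grows like $\omega_0^{-2}$ and swamps the $\omega_0$ prefactor. The structural fact I would exploit is that, with $\epsilon,\sigma$ constant in $\Omega$ and $\bm z_0\notin\Omega$, current conservation $\nabla\cdot(\nabla\times H)=0$ forces $\nabla\cdot E=0$ in the interior of $\Omega$, hence $\nabla\cdot\widetilde J=0$ there. The plan is therefore to integrate by parts in the $k_0^{-2}\nabla\nabla$ part of the representation: the volume contribution drops out, leaving only a boundary term on $\partial\Omega$ proportional to $k_0^{-2}\nabla_{\bm x}\int_{\partial\Omega}g\,\widetilde J\cdot\nu\,\mathrm{d}S$. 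A short computation using the continuity of the normal total current across $\partial\Omega$ gives $\widetilde J\cdot\nu=\mathrm{i}\omega_0\epsilon_0\,[E\cdot\nu]$, recovering one power of $\omega_0$; together with the manifestly $O(\omega_0)$ contributions of the $\sigma E$ and $-\mathrm{i}\omega_0(\epsilon-\epsilon_0)E$ terms in the non-singular part of the kernel. The genuinely subtle, and therefore central, step is to track how the surviving normal-trace jump $[E\cdot\nu]$ itself scales in the low-frequency regime and to balance it precisely against the $k_0^{-2}$ factor; carrying out this quasi-static bookkeeping and bounding the resulting surface integrals by the fixed separation $L$ and the constant material parameters is the crux of establishing $\|E-E_0\|_{L^\infty(\Gamma)}\le\beta_0\omega_0/c_0$.
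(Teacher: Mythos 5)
Your overall framework coincides with the paper's: you form the difference field, note that it radiates from the effective source $\widetilde J=(\sigma+\mathrm{i}\omega_0(\epsilon_0-\epsilon))E$ supported in $\Omega$, represent it on $\Gamma$ through the dyadic fundamental solution \eqref{eq:fundamental2}, and control $E$ in $\Omega$ by an a priori well-posedness bound --- exactly the chain \eqref{eq:ff1}--\eqref{eq:ff3}. For part (i) you then diverge in a useful way: instead of the paper's manoeuvre of invoking the divergence identity \eqref{eq:ff4} and integrating by parts to turn the $\frac{\mathrm{i}}{\omega_0\epsilon_0}\nabla^2G$ contribution into $\int_\Omega\nabla_{\bm x}G\cdot\nabla(E-E_0)$ as in \eqref{eq:ff5}, you bound the kernel directly, observing that under $|\bm x-\bm y|\ge L$ and $L\omega_0\sim1$ each homogeneity piece of $\bigl(\mathbb{I}+k_0^{-2}\nabla\nabla\bigr)g$ is $O(L^{-1})$, so that the prefactor $\omega_0\mu_0\sim L^{-1}$ delivers the $L^{-2}$ rate. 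This is correct and arguably cleaner: it needs only $\|E\|_{L^2(\Omega)}\le C$ (contained in \eqref{eq:ff3}) rather than control of $\nabla(E-E_0)$, and it sidesteps the $\partial\Omega$ boundary terms that the paper's integration by parts passes over in silence.

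Part (ii) is where your proposal has a genuine gap, which you yourself flag. After integrating by parts using $\nabla\cdot\widetilde J=0$ in $\Omega$, the surviving boundary term carries $\widetilde J\cdot\nu=\mathrm{i}\omega_0\epsilon_0\,[E\cdot\nu]$, but the power of $\omega_0$ so gained is exactly consumed by the other factors: $\omega_0\mu_0\cdot k_0^{-2}\cdot\omega_0\epsilon_0=\mu_0\epsilon_0c_0^2=1$, so the term is $O\bigl(\|[E\cdot\nu]\|_{L^1(\partial\Omega)}\bigr)$ with no residual smallness in $\omega_0$. The entire burden of \eqref{eq:ineq2} is thereby shifted onto showing $[E\cdot\nu]=O(\omega_0)$ on $\partial\Omega$, and that is precisely the step you defer as ``the crux''. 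It is not an innocuous bookkeeping step: the transmission condition $(\sigma-\mathrm{i}\omega_0\epsilon)E^-\cdot\nu=-\mathrm{i}\omega_0\epsilon_0E^+\cdot\nu$ does make the interior trace small relative to the exterior one when $\sigma>0$, but the exterior trace is driven by the incident field of the point current $p_0\delta(\bm x-\bm z_0)$, whose quasi-static part scales like $\omega_0^{-1}$ for fixed $p_0$, so the jump has no evident decay in $\omega_0$ and may even grow. To be fair, the paper's own treatment of (ii) is a single sentence (``a completely similar argument \ldots\ $\nabla\cdot E=0$ in $\Omega$ \ldots\ can significantly simplify''), so you have correctly located where the real work lies; but as written your proposal, like the paper's sketch, does not establish \eqref{eq:ineq2}.
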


\begin{rem}\label{rem:1}
In order to give a completely rigorous justification of \eqref{eq:diff1}, one should make use of the Fourier transform to convert the time-domain problems \eqref{eq:main1} and \eqref{eq:MaxwellFree} into their frequency-domain counterparts. However, since the optical parameters $\epsilon$ and $\sigma$ may also vary according to time, one may meet difficulties in such a Fourier argument. Hence in Theorem~\ref{thm:main}, we simplify our study by using the time-harmonic assumption \eqref{eq:th1} as well as by considering a fixed instant $t_0$. Nevertheless, it is unobjectionably to claim that under the conditions specified in Theorem~\ref{thm:main}, \eqref{eq:diff1} should also hold true as long as the emitter is moving within a certain bounded domain.
\end{rem}

\begin{rem}\label{rem:2}
By Theorem~\ref{thm:main}, $(i)$, if the inhomogeneity is away from the measurement surface by a reasonable distance, then the scattering influence from the inhomogeneity can be neglected under low frequency emission. We note that $|E_0(\bm{x})|\sim 1/|\bm{x}-\bm{z}_0|$ for $|\bm{x}-\bm{z}_0|$ sufficiently large, and hence \eqref{eq:ineq1} indicates that the difference between $E$ and $E_0$ on $\Gamma$ is indeed nearly negligible. The inhomogeneity is mainly used to model the body of the human being who performs the instruction/input, and this means that one can keep a reasonable distance away from the computing device when performing the instruction/input. Next, in Theorem~\ref{thm:main}, $(ii)$, we note that the assumption on $\bm{z}_0$ is obviously a reasonable one. The assumption on $\epsilon$ and $\mu$ being constant in $\Omega$ can be relaxed as being with small variations. In such a case, one can choose a small frequency $\omega_0$, and in doing so, the scattering influence from the inhomogeneity can also be eliminated. All in all, Theorem~\ref{thm:main} indicates that one should make use of low-frequency emission, and the human being that performs the input/instruction should keep a reasonable distance away from the computing device in the design of the proposed gesture-computing device. This has been confirmed by our numerical experiments in what follows under practical and mild conditions on $\omega_0$ and $L$, not as restrictive as the theoretical assumptions in Theorem~\ref{thm:main}.
\end{rem}

\begin{proof}[Proof of Theorem~\ref{thm:main}]

We first recall that the fundamental solution to the following system
\begin{equation}\label{eq:fundamental}
\begin{cases}
\mathrm{i}\omega_0\epsilon_0 U+\nabla\times V=\delta I,\quad \mbox{$I$ is the $3\times 3$ identify matrix},\smallskip\\
-\mathrm{i}\omega_0\mu_0 V+\nabla\times U=0,
\end{cases}
\end{equation}
is given by (see Theorem~5.2.1 in \cite{Ned}),
\begin{equation}\label{eq:fundamental2}
\begin{cases}
U(\bm x)=&\mathrm{i}\omega_0\mu_0 G(\bm x)I+\frac{\mathrm{i}}{\omega_0\epsilon_0} \nabla^2 G(\bm x),\\
V(\bm x)=&\nabla\times(G(\bm x) I),
\end{cases}
\end{equation}
where
\[
G(\bm x):=\frac{1}{4\pi}\frac{e^{\mathrm{i}k_0|\bm x|}}{|\bm x|}\quad\mbox{with}\quad k_0:=\omega_0\sqrt{\epsilon_0\mu_0}.
\]
In \eqref{eq:fundamental2}, the second order derivation should be understood in the sense of distributions in $\mathcal{D}'(\mathbb{R}^3)$.
Next, by subtracting \eqref{eq:th5} from \eqref{eq:th3}, one has
\begin{equation}\label{eq:ff1}
\begin{cases}
\nabla\times (H-H_0)+\mathrm{i}\omega_0\epsilon_0 (E-E_0)=\sigma E+\mathrm{i}\omega_0(\epsilon_0-\epsilon) E,\\
\nabla\times (E-E_0)-\mathrm{i}\omega_0\mu_0 (H-H_0)=0.
\end{cases}
\end{equation}
Applying \eqref{eq:fundamental} and \eqref{eq:fundamental2} to \eqref{eq:ff1}, along with some straightforward calculations, one has the following integral representations
\begin{equation}\label{eq:ff2}
\begin{cases}
E= E_0+\displaystyle{\int_\Omega \left(\mathrm{i}\omega_0\mu_0 G(\bm x-\bm y)+\frac{\mathrm{i}}{\omega_0\epsilon_0}\nabla_{\bm{x}}^2 G(\bm x-\bm y) \right)\cdot \left(\sigma E(\bm{y})+\mathrm{i}\omega_0(\epsilon_0-\epsilon) E(\bm{y})\right)\, {\rm d}\bm{y} }\medskip \\
H= H_0+\displaystyle{\int_\Omega\nabla_{\bm{x}}(G(\bm x-\bm y) I)\cdot(\sigma E(\bm{y})+\mathrm{i}\omega_0(\epsilon_0-\epsilon) E(\bm y))\, {\rm d}\bm{y} }
\end{cases}
\end{equation}
By the well-posedness of the Maxwell system \eqref{eq:th3}--\eqref{eq:silver}, and a standard compactness argument, one has for $\omega_0<1$ that
\begin{equation}\label{eq:ff3}
\|E\|_{H(\mathrm{curl},\Omega)}+\|H\|_{H(\mathrm{curl},\Omega)}\leq C_1\quad\mbox{and}\quad \|\nabla\cdot E\|_{L^2(\Omega)}\leq C_2,
\end{equation}
where $C_1$ and $C_2$ are two positive constants depending only on $\Omega, p_0, \bm{z}_0$, and $\mu_0, \epsilon_0, \|\epsilon\|_{C^{0,1}(\overline{\Omega})}$, $\|\sigma\|_{C^{0,1}(\overline{\Omega})}$. One also has from the first equation in \eqref{eq:ff1} that in the sense of distribution,
\begin{equation}\label{eq:ff4}
\nabla\cdot(\sigma E)=\mathrm{i}\omega_0\nabla\cdot(\epsilon E-\epsilon_0 E_0)\quad \mbox{in}\ \ \Omega.
\end{equation}
Hence, by using \eqref{eq:ff4} and \eqref{eq:ff3}, we have
\begin{equation}\label{eq:ff5}
\begin{split}
&\,\left|\int_{\Omega} \frac{\mathrm{i}}{\omega_0\epsilon_0}\nabla_{\bm{x}}^2 G(\bm x-\bm y) \cdot ( \sigma E(\bm{y})+\mathrm{i}\omega_0(\epsilon_0-\epsilon) E(\bm{y}))\, {\rm d}\bm{y}\right|\\
=&\,\left| \int_{\Omega} \nabla_{\bm{x}} G(\bm{x}-\bm{y})\cdot \nabla (E-E_0)\, {\rm d}\bm{y} \right|\\
\leq &\, C_3/L^2
\end{split}
\end{equation}
for $\bm{x}\in\Gamma$ with $L:=\mathrm{dist}(\Gamma,\Omega)\gg 1$ and $L\omega_0\sim 1$, where $C_3$ is a positive constant depending only on $\Omega, p_0, \bm{z}_0$, and $\mu_0, \epsilon_0, \|\epsilon\|_{C^{0,1}(\overline{\Omega})}$, $\|\sigma\|_{C^{0,1}(\overline{\Omega})}$. Furthermore, by using $L:=\mathrm{dist}(\Gamma,\Omega)\gg 1$ and $L\omega_0\sim 1$, one can easily show
\begin{equation}\label{eq:ff6}
\left|\int_{\Omega}\mathrm{i}\omega_0\mu_0 G(\bm{x}-\bm{y})\cdot(\sigma E(\bm y)+\mathrm{i}\omega_0(\epsilon_0-\epsilon) E(\bm{y}))\, {\rm d}\bm{y} \right|\leq C_4/L^2,
\end{equation}
for $\bm{x}\in\Gamma$. Finally, by using the first equation in \eqref{eq:ff2} in estimating $\|E-E_0\|_{L^\infty(\Gamma)}$, together with the use of the estimates in \eqref{eq:ff5} and \eqref{eq:ff6}, one can readily arrives at the estimate in \eqref{eq:ineq1}.

For the second case in Theorem~\ref{thm:main}, it can be proved by following a completely similar argument as that for the proof of the first case. The main point is that if the emitter is lying outside $\Omega$, and $\epsilon, \sigma$ are constant in $\Omega$, one can easily have from the second equation in \eqref{eq:th3} that $\nabla\cdot E=0$ in $\Omega$. This fact can significantly simplify the corresponding argument in deriving \eqref{eq:ineq2}.

The proof is complete.

\end{proof}

In what follows, we let $c_0=1/\sqrt{\epsilon_0\mu_0}$ be the speed of the light in the background space. It is natural to assume that $|\bm v(t)|\ll c_0$; that is, the emitter is moving in a speed much slower than the light.
The retarded time $\tau$ is defined implicitly by the unique solution to
\begin{equation}\label{eq:ret}
\tau=t-\frac{|\bm R(\bm x,\tau)|}{c_0},\quad \bm x \in \Gamma,\,\, 0<\tau<t,
\end{equation}
where
$\bm R(\bm x,\tau):=\bm{x}-\bm{z}(\tau)$.
 Since $\Gamma \cap \Lambda_{\bm{z}}=\emptyset$, it holds that
 $|\bm R(\bm x,\tau)|>0, \forall x\in \Gamma, \tau \in \mathbb{R}_+$.
Then the solution $\bm{E}_0$ to \eqref{eq:MaxwellFree} is given by the well-known Li\'{e}nard-Wiechert potential (see, e.g. \cite[p. 438]{Griffiths})
\begin{equation}\label{eq:electric}
\begin{aligned}
\bm{E}_0(\bm x,t;\Lambda_{\bm{z}})
  =&\frac{q(\tau)}{4\pi\epsilon_0}\frac{|\bm{R}(\bm x,\tau)|}{(\bm{R}(\bm x,\tau) \cdot \bm u(\bm x,\tau))^3}\\
  &\cdot\left( (c_0^2-|\bm{v}(\tau)|^2)\bm u(\bm x,\tau)
  -\bm u(\bm x,\tau) \times \frac{\mathrm{d}^2\bm{z}(\tau)}{\mathrm{d}t^2}\times\bm{R}(\bm x,\tau)\right),
\end{aligned}
\end{equation}
where
\begin{equation*}
  \bm u(\bm x,\tau)=c_0 \hat{\bm{R}}(\bm x,\tau)-\bm{v}(\tau), \quad \hat{\bm{R}}(\bm x,\tau)=\frac{\bm{R}(\bm x,\tau)}{|\bm{R}(\bm x,\tau)|}.
\end{equation*}

\begin{thm}\label{thm2.2}
Let $\tau$ be the retarded time defined in \eqref{eq:ret} for $t\in (0,T]$ and $\bm{E}_0(\bm{x}, t;\Lambda_{\bm{z}})$ be defined in  \eqref{eq:electric}. Suppose that
\begin{equation}\label{eq:conn1}
\zeta(t):=\frac{(t-\tau)\omega_0}{2\pi}\ll 1,
\end{equation}
then we have
\begin{equation}\label{eq:appro}
\bm{E}_0(\bm{x},t;\Lambda_{z})=\frac{\sin \omega_0 t }{4\pi\epsilon_0|\bm R(\bm x,t)|^2} \hat{\bm{R}}(\bm x, \tau)+ \mathcal{O}(\zeta(t)) \quad\mbox{for}\ \ (\bm{x}, t)\in \Gamma\times (0, T].
\end{equation}
\end{thm}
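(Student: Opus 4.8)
The plan is to expand the Li\'{e}nard-Wiechert expression \eqref{eq:electric} under the smallness hypothesis \eqref{eq:conn1}, treating $\zeta(t)=(t-\tau)\omega_0/2\pi$ as the governing small parameter and collecting every term that does not survive to leading order into the error $\mathcal{O}(\zeta(t))$. The nonrelativistic assumption $|\bm v(t)|\ll c_0$ has already been imposed, so I would first record its consequences: since $\bm u(\bm x,\tau)=c_0\hat{\bm R}(\bm x,\tau)-\bm v(\tau)$, we have $\bm u(\bm x,\tau)=c_0\hat{\bm R}(\bm x,\tau)\bigl(1+\mathcal{O}(|\bm v|/c_0)\bigr)$, whence $c_0^2-|\bm v(\tau)|^2=c_0^2\bigl(1+\mathcal{O}(|\bm v|^2/c_0^2)\bigr)$ and $\bm R\cdot\bm u=c_0|\bm R|\bigl(1+\mathcal{O}(|\bm v|/c_0)\bigr)$. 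The acceleration term $\bm u\times\tfrac{\mathrm{d}^2\bm z}{\mathrm{d}t^2}\times\bm R$ is lower order in $1/c_0$ relative to the $(c_0^2-|\bm v|^2)\bm u$ contribution and can be absorbed once we verify it is controlled; I would make explicit that $\bm z\in C^2$ guarantees $\mathrm{d}^2\bm z/\mathrm{d}t^2$ is bounded on $[0,T]$.

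Next I would feed these expansions into the prefactor. The combination
\[
\frac{|\bm R|}{(\bm R\cdot\bm u)^3}\,(c_0^2-|\bm v|^2)\,\bm u
=\frac{|\bm R|}{(c_0|\bm R|)^3}\,c_0^2\cdot c_0\hat{\bm R}\,\bigl(1+\mathcal{O}(|\bm v|/c_0)\bigr)
=\frac{1}{|\bm R|^2}\,\hat{\bm R}\,\bigl(1+\mathcal{O}(|\bm v|/c_0)\bigr),
\]
so the geometric and velocity factors collapse to $\hat{\bm R}(\bm x,\tau)/|\bm R(\bm x,\tau)|^2$ up to the stated error. The remaining ingredient is the scalar charge amplitude $q(\tau)=\sin(\omega_0\tau)$, which must be replaced by $\sin(\omega_0 t)$. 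This is exactly where \eqref{eq:conn1} enters: writing $\omega_0\tau=\omega_0 t-\omega_0(t-\tau)=\omega_0 t-2\pi\zeta(t)$ and applying the mean value theorem (or a first-order Taylor expansion of $\sin$) gives $\sin(\omega_0\tau)=\sin(\omega_0 t)+\mathcal{O}(\zeta(t))$, since $|\sin(\omega_0\tau)-\sin(\omega_0 t)|\le \omega_0(t-\tau)=2\pi\zeta(t)$. Likewise the evaluation point $\bm R(\bm x,\tau)$ in the denominator can be replaced by $\bm R(\bm x,t)$ at the cost of $\mathcal{O}(t-\tau)=\mathcal{O}(\zeta(t))$, using that $\bm z$ is Lipschitz on $[0,T]$ and that $|\bm R(\bm x,\tau)|$ is bounded away from zero because $\Gamma\cap\Lambda_{\bm z}=\emptyset$; note the direction factor $\hat{\bm R}$ is intentionally left at the retarded argument $\tau$ in \eqref{eq:appro}, so only the magnitude needs this substitution.

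Assembling these pieces yields
\[
\bm E_0(\bm x,t;\Lambda_{\bm z})=\frac{\sin(\omega_0 t)}{4\pi\epsilon_0|\bm R(\bm x,t)|^2}\,\hat{\bm R}(\bm x,\tau)+\mathcal{O}(\zeta(t)),
\]
which is precisely \eqref{eq:appro}. The main obstacle, I expect, is the bookkeeping of the two distinct small parameters $\zeta(t)$ and $|\bm v|/c_0$: the theorem packages the entire error as $\mathcal{O}(\zeta(t))$, so I must argue that the velocity-induced corrections are either subsumed by or comparable to $\zeta(t)$. Since $\zeta(t)=\omega_0(t-\tau)/2\pi$ and, from \eqref{eq:ret}, $t-\tau=|\bm R(\bm x,\tau)|/c_0$, we have $\zeta(t)=\omega_0|\bm R|/(2\pi c_0)$, which ties the temporal retardation directly to $1/c_0$; verifying that $|\bm v|/c_0$ is controlled on the same scale (using boundedness of $|\bm v|$ on $[0,T]$ and of $|\bm R|$ on the compact set $\Gamma$) is the delicate step that legitimizes folding all corrections into a single $\mathcal{O}(\zeta(t))$ term uniformly in $(\bm x,t)\in\Gamma\times(0,T]$.
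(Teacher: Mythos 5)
Your proposal is correct and follows essentially the same route as the paper: expand the Li\'enard--Wiechert expression, replace $\sin\omega_0\tau$ by $\sin\omega_0 t$ and $|\bm R(\bm x,\tau)|$ by $|\bm R(\bm x,t)|$ at cost $\mathcal{O}(\zeta(t))$, and—crucially—use the retarded-time relation $t-\tau=|\bm R(\bm x,\tau)|/c_0$ to show that the velocity corrections of size $|\bm v|/c_0$ are themselves $\mathcal{O}(\zeta(t))$, which is exactly the paper's estimate \eqref{eq:dd3}. The only cosmetic difference is that you expand first in $|\bm v|/c_0$ and convert to $\zeta(t)$ afterwards, whereas the paper estimates $c_0^{-1}\hat{\bm R}\cdot\bm v=\mathcal{O}(\zeta(t))$ directly.
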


\begin{proof}
From \eqref{eq:charge}, $\bm{E}_0(\bm{x}, t;\Lambda_{\bm{z}})$ defined in \eqref{eq:electric} can be written as
\begin{equation}\label{eq:1.2}
\begin{aligned}
\!\! \!\!\!\! &\bm{E}_0(\bm{x},t;\Lambda_{\bm{z}})=
\frac{\sin \omega_0 \tau }{4\pi\epsilon_0 |\bm{R}(\bm x,\tau)|^2(1-c_0^{-1}\hat{\bm{R}}(\bm x, \tau)\cdot\bm{ v}(\tau))^3} \\
\!\! \!\!\!\! &\cdot \left(\hat{\bm{R}}(\bm x, \tau)-\frac{\bm{v}(\tau)}{c_0}-\frac{\bm{v}^2(\tau)\hat{\bm{R}}(\bm x, \tau)}{c_0^2}
+ \frac{\bm{v}^3(\tau)}{c_0^3} -\frac{\bm u(\bm x,\tau) \times \displaystyle \frac{\mathrm{d}^2\bm{z}(\tau)}{\mathrm{d}t^2}\times \bm{R}(\bm x,\tau)}{c_0^3}\right),
\end{aligned}
\end{equation}
where $\tau$ is the retarded time defined in \eqref{eq:ret}.
By the assumption, we have
\begin{equation*}
  \displaystyle t-\tau=\zeta(t)\frac{2\pi}{\omega_0}.
\end{equation*}
 Next, by straightforward calculations, we have
\begin{equation}\label{eq:dd1}
\begin{split}
  \sin\omega_0\tau=&\sin \omega_0 \left(t-\zeta(t)\frac{2\pi}{\omega_0}\right)\\
                 =& \sin\omega_0 t \,\cos 2\pi\zeta(t)-\cos\omega_0 t \, \sin 2\pi\zeta(t)\\
                  =& \sin\omega_0 t +\mathcal{O}(\zeta(t)).\\
\end{split}
\end{equation}
and there exist $\eta_1, \eta_2, \eta_3$, such that $t-2\pi\zeta(t)/\omega_0<\eta_1, \eta_2, \eta_3 < t$ and
\begin{equation}\label{eq:dd2}
\begin{split}
 |\bm{R}(\bm x,\tau)|
              =&|\bm{x}-\bm{z}(\tau)|\\
               =&\left|\bm{x}-\bm{z}(t-\zeta(t)\frac{2\pi}{\omega_0})\right|\\
               =&\left|\bm{x}-\bm{z}(t)+\left(\frac{\mathrm{d}{z}_{1}(\eta_1)}
               {\mathrm{d} t},\frac{\mathrm{d}z_{2}(\eta_2)}
               {\mathrm{d} t},\frac{\mathrm{d}z_{3}(\eta_3)}
               {\mathrm{d} t}\right)\zeta(t)\frac{2\pi}{\omega_0}\right|\\
               =& |\bm{x}-\bm{z}(t)|\,|1+\mathcal{O}(\zeta(t))|,\\
\end{split}
\end{equation}
where $\bm{z}=(z_{1},z_{2},z_{3})$.
In addition,
\begin{equation}\label{eq:dd3}
\begin{split}
  \frac{\hat{\bm{R}}(\bm x,\tau)\cdot \bm{v}(\tau)}{c_0}
  =&\frac{|\bm{v}(\tau)|\cos\beta(\tau)}{c_0}\\
  =&\frac{2\pi\zeta(t)|\bm{v}(\tau)|\cos\beta(\tau)}{\omega_0|\bm{R}(\bm x,\tau)|}\\
  =&\mathcal{O}(\zeta(t)),
\end{split}
\end{equation}
where $\beta(\tau)\in [0, \pi]$ denotes the angle between  $\bm{x}-\bm{z}(\tau)$ and $\bm{v}(\tau)$.
Finally, by plugging \eqref{eq:dd1}, \eqref{eq:dd2} and \eqref{eq:dd3} into \eqref{eq:1.2}, along with straightforward asymptotic analysis, one can show
\begin{equation}\label{eq:dd4}
\bm{E}_0(\bm{x},t;\Lambda_{z})=\frac{\sin \omega_0 t }{4\pi\epsilon_0|\bm R(\bm x,t)|^2} \hat{\bm{R}}(\bm x, \tau)+ \mathcal{O}(\zeta(t)) \quad\mbox{for}\ \ (\bm{x}, t)\in \Gamma\times (0, T].
\end{equation}
The proof is complete.

\end{proof}

\section{Motion trajectory recovery}

We are now in a position to present two imaging schemes for qualitatively determining the motion trajectory $\Lambda_{\bm{z}}$ by knowledge of $|\bm{E}(\bm{x},t;\Lambda_{\bm z})|$ , namely,
\begin{equation}\label{invention}
  |\bm{E}(\bm{x},t;\Lambda_{\bm{z}})| \longrightarrow \Lambda_{\bm{z}},\quad (\bm{x}, t)\in \Gamma\times(0, T].
\end{equation}

We would like to point out that we are using the data without phase information, i. e., only the strength of the wave field is available. For inverse problems with phaseless data, we refer to the recent work \cite{KR}. 

\subsection{Imaging via the direct sampling method}

Define
\begin{equation}\label{eq:varphi}
  \phi(\bm x,t;{\bm y}):=\frac{|\sin \omega_0 t|}{4\pi\epsilon_0|\bm x-{\bm y}|^2},\quad  (\bm x,t,{\bm y})\in \Gamma\times(0,T] \times D,
\end{equation}
where $D\subset \mathbb{R}^3$ is the static compact sampling region, such that $\Lambda_{\bm{z}}\subset D$.
In the present study, for $(\hat{\bm{z}},t)\in D \times(0,T]$, we propose the indicator function as follows:
\begin{equation}\label{eq:indicator}
  I(\bm y, t):= \exp\left(-\frac{1}{\alpha|\Gamma|} \int_{\Gamma} \left(\phi(\bm x,t;{\bm y})- |\bm E(\bm x,t;\Lambda_{\bm{z}})|\right)^2 \mathrm{d}s(\bm x)\right),
\end{equation}
where $\alpha>0$ is a parameter and $|\Gamma|$ denotes the surface area of $\Gamma$.

\begin{thm}\label{thm3}
Let  $\bm E(\bm x,t;\Lambda_{\bm{z}})$ be the measurement data for $(\bm x,t)\in\Gamma\times(0,T]$, corresponding to a moving point charge described in \eqref{eq:main1} and let $I({\bm y},t)$ be defined in \eqref{eq:indicator}.
Define
\begin{equation}\label{eq:indicator0}
  I_0({\bm y}, t )=\exp\left(-\frac{1}{\alpha|\Gamma|} \int_{\Gamma}
 \left( \phi(\bm x,t;{\bm y})- |{\bm E_0}
 (\bm x,t;\Lambda_{\bm{z}})| \right)^2 \mathrm{d}s(\bm x)\right), \quad 0<t\leq T.
\end{equation}
Let the parameters be chosen such that \eqref{eq:ineq1} and \eqref{eq:ineq2} in Theorem \ref{thm:main} hold, then for each fixed $t_0\in (0, T]$ and any given $\epsilon>0$, there exists an $\omega_0>0$, such that  
\begin{equation*}
\begin{aligned}
 |I_0({\bm y}, t_0)-I({\bm y}, t_0 )|< C \varepsilon,\quad \forall {\bm y}\in D,
\end{aligned}
\end{equation*}
where $C>0$ is a constant depending on $t_0, \alpha$ and $|\Gamma|$. 
\end{thm}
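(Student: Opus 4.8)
The plan is to transfer the $L^\infty(\Gamma)$-smallness of the time-harmonic field difference $E-E_0$, supplied by Theorem~\ref{thm:main}, through the two nonlinear indicator functionals \eqref{eq:indicator} and \eqref{eq:indicator0}. First I would pass from the frequency-domain amplitudes to the physical fields: by the ansatz \eqref{eq:th1}--\eqref{eq:th2} one has $\bm E(\bm x,t_0)-\bm E_0(\bm x,t_0)=\Re\!\big((E(\bm x)-E_0(\bm x))e^{-\mathrm{i}\omega_0 t_0}\big)$, whence $|\bm E(\bm x,t_0)-\bm E_0(\bm x,t_0)|\le |E(\bm x)-E_0(\bm x)|$ pointwise on $\Gamma$. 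Invoking Theorem~\ref{thm:main} in the regime where \eqref{eq:ineq1} or \eqref{eq:ineq2} holds, the right-hand side is controlled by $\alpha_0/L^2$ or $\beta_0\omega_0/c_0$; given $\varepsilon>0$, I then choose $\omega_0$ small enough (and, in case (i), $L$ correspondingly large, since $L\omega_0\sim 1$) so that $\delta:=\sup_{\bm x\in\Gamma}|\bm E(\bm x,t_0)-\bm E_0(\bm x,t_0)|\le\varepsilon$. The reverse triangle inequality upgrades this to $\big||\bm E(\bm x,t_0)|-|\bm E_0(\bm x,t_0)|\big|\le\delta$ uniformly on $\Gamma$, which is the form actually entering the indicators.

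Next I would linearise the functionals. Writing the exponents of \eqref{eq:indicator} and \eqref{eq:indicator0} at $t=t_0$ as $f(\bm y)$ and $f_0(\bm y)$, the difference of integrands factorises as
\begin{equation*}
(\phi-|\bm E|)^2-(\phi-|\bm E_0|)^2=\big(|\bm E_0|-|\bm E|\big)\big(2\phi-|\bm E|-|\bm E_0|\big),
\end{equation*}
so that, using the $L^\infty$-bound on $\big||\bm E_0|-|\bm E|\big|$,
\begin{equation*}
|f(\bm y)-f_0(\bm y)|\le\frac{\delta}{\alpha|\Gamma|}\int_{\Gamma}\big(2\phi(\bm x,t_0;\bm y)+|\bm E|+|\bm E_0|\big)\,\mathrm{d}s(\bm x)\le\frac{M}{\alpha}\,\delta,
\end{equation*}
where $M:=\sup_{\bm y\in D,\,\bm x\in\Gamma}\big(2\phi(\bm x,t_0;\bm y)+|\bm E(\bm x,t_0)|+|\bm E_0(\bm x,t_0)|\big)$. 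Finiteness of $M$ rests on the geometric separation hypotheses: since $\Lambda_{\bm z}\subset D$, $\Gamma\subset\mathbb{R}^3\setminus\Omega$ and $\Gamma\cap\Lambda_{\bm z}=\emptyset$, the distances $\mathrm{dist}(\Gamma,D)$ and $\mathrm{dist}(\Gamma,\Lambda_{\bm z})$ are positive, so $\phi\propto|\bm x-\bm y|^{-2}$ from \eqref{eq:varphi} is uniformly bounded on $\Gamma\times D$ (using $|\sin\omega_0 t_0|\le 1$), while $|\bm E_0|$ is bounded on $\Gamma$ through the explicit Li\'{e}nard--Wiechert form \eqref{eq:electric}, and $|\bm E|\le|\bm E_0|+\delta$ is bounded in turn.

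Finally, since $f,f_0\ge 0$ and $s\mapsto e^{-s}$ is $1$-Lipschitz on $[0,\infty)$, I would conclude
\begin{equation*}
|I_0(\bm y,t_0)-I(\bm y,t_0)|=\big|e^{-f_0(\bm y)}-e^{-f(\bm y)}\big|\le|f(\bm y)-f_0(\bm y)|\le\frac{M}{\alpha}\,\delta\le C\varepsilon
\end{equation*}
for all $\bm y\in D$, with $C=M/\alpha$ depending only on $t_0$, $\alpha$ and the fixed geometry (in particular $|\Gamma|$), as claimed. The genuinely delicate point is the second step: securing the bound $M$ \emph{uniformly} in the sampling point $\bm y\in D$, i.e. controlling $\phi(\cdot,t_0;\bm y)$ and the fields on $\Gamma$ simultaneously for every $\bm y$. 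This is precisely where the separation of $\Gamma$ from both the sampling region $D$ and the trajectory $\Lambda_{\bm z}$ is indispensable; once $M$ is in hand, the passages through $|\cdot|$ and $\exp$ are routine.
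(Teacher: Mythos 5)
Your proposal is correct and follows essentially the same route as the paper: both hinge on the factorization $(\phi-|\bm E|)^2-(\phi-|\bm E_0|)^2=(|\bm E_0|-|\bm E|)(2\phi-|\bm E|-|\bm E_0|)$, make the first factor small via Theorem~\ref{thm:main} while keeping the second factor bounded, and then pass the smallness through the exponential. The only differences are cosmetic refinements on your side --- sup-norm bounds in place of the paper's Cauchy--Schwarz step, and the $1$-Lipschitz property of $e^{-s}$ on $[0,\infty)$ (which cleanly handles the sign of the exponent difference) in place of the paper's Taylor expansion --- together with a more explicit justification of the uniform bound over $\bm y\in D$, which the paper leaves implicit.
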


\begin{proof}
Combining \eqref{eq:indicator} and \eqref{eq:indicator0}, one could obtain
\begin{equation}
\begin{aligned}
 &|I_0({\bm y}, t)-I({\bm y}, t )|\\
  &=\left|I_0({\bm y}, t ) \left(1 - \exp\left(-\frac{1}{\alpha|\Gamma|} \int_{\Gamma}
 \left( \phi- |{\bm E}| \right)^2-
 \left( \phi- |{{\bm E}_0}| \right)^2 \mathrm{d}s(\bm x)\right)  \right)\right|\\
  &=\left|I_0({\bm y}, t ) \left(1 - \exp\left(-\frac{1}{\alpha|\Gamma|} \int_{\Gamma}
 (2\phi- |{\bm E}|- |{\widetilde{\bm E}_0}|)
 (|{\widetilde{\bm E}_0}|- |{\bm E}|)\, \mathrm{d}s(\bm x)\right)  \right)\right|,\\
\end{aligned}
\end{equation}
using the Cauchy-Schwarz inequality,
\begin{equation*}
\begin{aligned}
&\int_{\Gamma}
 \left( 2\phi- |{\bm E}|- |{\bm E}_0|\right)
 \left(|{{\bm E}_0}|- |{\bm E}|\right) \mathrm{d}s(\bm x)\\
&=\int_{\Gamma}
 \left( 2(\phi-|{\bm E}_0|) - (|{\bm E}_0|-|{\bm E}|)\right)
 \left(|{{\bm E}_0}|- |{\bm E}|\right) \mathrm{d}s(\bm x)\\
&\leq
 \| 2(\phi-|{\bm E}_0|) - (|{\bm E}_0|-|{\bm E}|)\|_{L^2(\Gamma)}
  \| (|{{\bm E}_0}|- |{\bm E}| ) \|_{L^2(\Gamma)} \\
 & \leq
 (4\| \phi-|{\bm E}_0|\|_{L^2(\Gamma)} +2\| (|{\bm E}_0|-|{\bm E}|)\|_{L^2(\Gamma)})
  \| (|{{\bm E}_0}|- |{\bm E}| ) \|_{L^2(\Gamma)} \\
   & =
 4\| \phi-|{\bm E}_0|\|_{L^2(\Gamma)}\| (|{{\bm E}_0}|- |{\bm E}| ) \|_{L^2(\Gamma)} +2\| (|{\bm E}_0|-|{\bm E}|)\|_{L^2(\Gamma)}^2.
\end {aligned}
\end{equation*}
From \eqref{eq:dd4}, \eqref{eq:varphi} and Theorem \ref{thm:main},
for every $\varepsilon>0$, there exists an $\omega_0>0$, such that
\begin{equation*}
  \int_{\Gamma}
 \left( 2\phi- |{\bm E}|- |{\bm E}_0|\right)
 \left(|{{\bm E}_0}|- |{\bm E}|\right) \mathrm{d}s(\bm x)<\varepsilon.
\end{equation*}
Through \eqref{eq:appro} and \eqref{eq:varphi}, we can find that $I_0({\bm y},t_0)$
is bounded, i.e., $ |I_0|<\infty$. Let $C_1=\max_{{\bm y}\in D} I_0({\bm y},t_0)$. Then by Theorem \ref{thm:main}, Theorem \ref{thm2.2} and Taylor's expansion, we have
\begin{equation*}
 |I_0({\bm y}, t_0)-I({\bm y}, t_0 )|\leq |C_1(1 - \exp(-C_2\varepsilon))|< C_1 C_2 \varepsilon,
\end{equation*}
where $C_2=1/(\alpha |\Gamma|)$. 
\end{proof}

\begin{rem}\label{remark:indicator}
Theorem \ref{thm3} illustrates that $I({\bm y}, t)$ is quite close to $I_0({\bm y},t)$. Thus the indicator $I$ should inherit $I_0$'s maximum indicating behavior. For any fixed $t_0\in (0, T]$, it can be easily deduced that $I_0({\bm y},t_0)$ attains its maximum when ${\bm y}={\bm z}(t_0)$. Therefore, $I({\bm y},t_0)$ attains its maximum when ${\bm y}\approx{\bm z}(t_0)$.
\end{rem}

Based on Theorem \ref{thm3} and Remark \ref{remark:indicator}, we proposed  the first trajectory reconstruction  scheme for the inverse problem \eqref{invention}, see {\bf Algorithm 1}.
\begin{table}[htp]
\centering
\begin{tabular}{cp{.8\textwidth}}
\toprule
\multicolumn{2}{l}{{\bf Algorithm 1:}\quad Reconstruction of the trajectory via direct sampling}\\
 \midrule
 {\bf Step 1} & Properly choose a low frequency $\omega_0$ and a point charge of the form \eqref{eq:charge}.  \\
{\bf Step 2} & Set the emitter in motion following a specific path $\Lambda_{\bm{z}}$, depending on the desired input/instruction. The sensors collect the electric field data $|\bm{E}(\bm{x}_m, t_n;\Lambda_{\bm z})|$   at the measurement points $\{\bm{x}_m\}\in \Gamma$ and a sequence of discrete time points $\{t_n\}\in (0, T]$. \\
{\bf Step 3} & Select a  sampling mesh $\mathcal{T}_h$ in a region $D$ such that $\Lambda_z\subset D$. For each time point $t_n$, evaluate the imaging functional $I(\bm{y},t_n)$ defined in \eqref{eq:indicator} for each $\bm{z}\in \mathcal{T}_h$.\\
 {\bf Step 4} & Locate the global maximum point ${\bm z}_n$ of $I(\bm{y},t_n)$ for $\bm{y}\in\mathcal{T}_h$, which is an approximation to $\bm{z} (t_n)$. \\
{\bf Step 5} & The ordered chain $\{\bm z_n\}_{n=1}^{N_t}$ forms a discrete version of the reconstruction of $\Lambda_{\bm{z}}$. \\
\bottomrule
\end{tabular}
\end{table}
It is remarked that the proposed sampling scheme could be able to recover the moving trajectory if the sampling grid is sufficiently fine. However, the computational cost would be relatively high when the indicator functional is evaluated over a very fine mesh in {\bf Step 3}. To speed up the reconstruction process, in the next subsection, we shall develop an alternative meshless approach, namely, the modified particle filter method.

\subsection{Imaging via the modified particle filter method}

In this subsection, we are going to reformulate the problem of trajectory reconstruction as a stochastic inverse problem based on a probability space. The method provided here could be considered as a modified version of the classical particle filter method. For the ease of the readers, we recall some essential rudiments of the particle filter method in Appendix \ref{App}. 

The underlying assumption of the stochastic inverse problem is that the time-discrete reconstruction is a sequence of states of random variables, which sample some particular probability distribution.
Assume that the measured data were collected at $N_t$ discrete times $t_n = n\Delta t, n = 1, 2, \cdots, N_t$, with time step $\Delta t = T/N_t$. Now let us consider a time-discrete Markov chain of states $\{\bm{\xi}_n\}$ corresponding to time $t_n$ for $n=1,\cdots, N_t$. The transition probability of the Markov chain is denoted by $p(\bm{\xi}_n\mid\bm{\xi}_{n-1})$, i.e., the probability to draw $\bm{\xi}_n$ at time $t_n$ when $\bm{\xi}_{n-1}$ was drawn in its previous step $t_{n-1}$.

To specifically quantify the trajectory of the point emitter by the strategy of Markov chains,  we consider the following Gaussian random walk model $\{\bm{\xi}_n\}$ satisfying
\[
\bm{\xi}_n = \bm{\xi}_{n-1}+\bm{G}_n, \quad n=1, 2,\cdots.
\]
Here $\bm{G}_n$ denotes a Gaussian distribution with zero mean and covariance matrix
\[
\Upsilon: ={\rm diag}(\gamma^2, \gamma^2,\gamma^2)
\]
where $\gamma>0$ controls the step size in the spatial evolution.
Correspondingly, the transition density distribution is
\begin{equation}\label{eq:pro1}
 p(\bm{\xi}_{n}| \bm{\xi}_{n-1})=\frac{1}{(2\pi\gamma^2)^{3/2}} \exp\left(-\frac{1}{2}(\bm{\xi}_{n}- \bm{\xi}_{n-1})^{\top} \Upsilon^{-1}(\bm{\xi}_{n}- \bm{\xi}_{n-1})\right).
\end{equation}

Considering the observation model
\begin{equation*}
  |\bm E(\bm x,t_n;\Lambda_{\bm{z}})|=\phi(\bm x, t; \bm \xi_n)+ W_n,
\end{equation*}
in light of  \eqref{eq:indicator},  we define the  density function of $W_n$ by
\begin{equation}\label{eq:density1}
  p(|\bm E(\bm x,t_n;\Lambda_{\bm{z}})|\mid \bm \xi_n)= \frac{I(\bm \xi_n,t_n)}{\int_D I(\bm \xi_n,t_n)\,\mathrm{d}\bm \xi_n}.
\end{equation}
where $I$ is defined in \eqref{eq:indicator}.

By Appendix \ref{App}, the posterior probability density function is approximate to
 \begin{equation*}
 p^{N_s}(\bm \xi_n\mid |\bm E(\bm x,t_{1:n};\Lambda_{\bm{z}})|)
 =\frac{1}{{\rm d} \bm \xi_n}\sum\limits_{i=1}^{N_s}p(|\bm E(\bm x,t_n;\Lambda_{\bm{z}})|\mid \bm \xi_n) \delta_{\bm \xi_n^{(i)}}({\rm d} \bm \xi_n).
 \end{equation*}
where $\bm{\xi}_n^{(i)}\in D, 1\leq i\leq N_s,$ are $N_s$ random sample points at the instance $t_n$.

Based on the above discussion, we next present the trajectory reconstruction scheme via the modified particle filter method in {\bf Algorithm 2}.
\begin{table}[htp]
\centering
\begin{tabular}{cp{.8\textwidth}}
\toprule
\multicolumn{2}{l}{{\bf Algorithm 2:}\quad Reconstruction of the trajectory via  particle filter.}\\
\midrule
 {\bf Step 1} & Properly choose a low frequency $\omega_0$ and a point charge of the form \eqref{eq:charge}.  \\
{\bf Step 2} & Set the emitter in motion following a specific path $\Lambda_{\bm{z}}$, depending on the desired input/instruction. The sensors collect the electric field data $|\bm{E}(\bm{x}_m, t_n;\Lambda_{\bm{z}})|$   at the measurement points $\{\bm{x}_m\}\in \Gamma$ and a sequence of discrete time points $\{t_n\}\in (0, T]$. \\
 {\bf Step 3} &Initialization: for $n=0$, draw  an initial random sample $ \{\bm{\xi} _{n}^{(i)}\}_{i=1}^{N_s} $ from a  uniform distribution in a region $D$ such that $\Lambda_z\subset D$.   \\
{\bf Step 4} & Importance sampling: for $n\geq 1$,
let $\tilde{\bm \xi}_n^{(i)}=\bm \xi_{n-1}^{(i)}+b_n$, where $b_n\sim \mathcal{N}(0, \Upsilon)$
and calculate the relative likelihood\\
& \quad\quad\quad\quad $
w_{m,n}^{(i)}=\dfrac{p(|\bm E(\bm x_m,t_n;\Lambda_{\bm{z}})|\mid \tilde{\bm \xi}_n^{(i)})}{\sum\limits_{i=1}^{N_s} p(|\bm E(\bm x_m,t_n;\Lambda_{\bm{z}})|\mid \tilde{\bm \xi}_n^{(i)})}.
$\\
{\bf Step 5} & Resample: let $\mathcal{U}$ be the uniformly distribution density function,  draw a random number $q\sim \mathcal{U}([0,1])$. For every $i$,  set $\bm{\xi}_{n}^{(i)}=\tilde{\bm{\xi}}_{n}^{(\ell)}$ when
 $\displaystyle\sum_{i=1}^{\ell-1}w_{m,n}^{(i)}<q\leq\sum_{i=1}^{\ell}w_{m,n}^{(i)}$.
  Set $\bm z_n=\dfrac{1}{N_s}\sum_{i=1}^{N_s} \bm \xi_n^{(i)}$. \\
{\bf Step 6} &If $(n+1)\Delta t \geq N_t$, then the reconstruction is finished. Otherwise, set $n=n+1$ and repeat from \bf {Step 4}. \\
{\bf Step 7} & The ordered chain $\{\bm z_n\}_{n=1}^{N_t}$ forms a discrete version of the reconstruction of $\Lambda_{\bm{z}}$. \\
\bottomrule
\end{tabular}
\end{table}

\begin{rem}
The critical feature of the modified particle filter method is that we use multi-measurements to identify the trajectory at discrete instants. Using the discrete observation points, the discrete version of  the indicator function in \eqref{eq:indicator} could be written as
\begin{equation*}
\begin{aligned}
  {\bm I}({\bm y}, t):
  &=\exp\left(-\frac{1}{\alpha|\Gamma|} \sum_{m=1}^{N_m} \left(\phi(\bm x_m,t;{\bm y})- |\bm E(\bm x_m,t;\Lambda_{\bm{z}})|\right)^2 \frac{|\Gamma|}{N_m}  \right)\\
   &= \exp\left(-\frac{1}{\alpha N_m} \sum_{m=1}^{N_m} \left(\phi(\bm x_m,t;{\bm y})- |\bm E(\bm x_m,t;\Lambda_{\bm{z}})|\right)^2 \right).
  \end{aligned}
\end{equation*}
\end{rem}

\section{Numerical examples}\label{sect:numerical}

In this section, we will present several numerical examples to demonstrate the feasibility and effectiveness of the proposed method.

  All the following numerical experiments are carried out using  MATLAB R2016a   on a Lenovo workstation with $2.3$GHz Intel Xeon E5-2670 v3 processor and 128GB RAM. Synthetic  electromagnetic  field  data are generated by solving direct problem \eqref{eq:main1}  by using the quadratic finite elements on a truncated  spherical  domain enclosed by absorbing boundary  condition. The mesh of the forward solver is successively refined till the relative error of the successive measured electromagnetic wave data is below $0.1\%$. To test the stability of the proposed reconstruction algorithm, Gaussian noise was point-wisely added to the synthetic data, that is,
\begin{equation*}
  |\bm E_{\varepsilon}(\bm x,t;\Lambda_{\bm{z}})|:=|\bm E(\bm x,t;\Lambda_{\bm{z}})|(1+\varepsilon r ), \quad r\sim \mathcal{N}(0, 1),
 \end{equation*}
where $\varepsilon>0$ is the noise level. Here, $10\%$ noise was added to the  synthetic data, namely, $\varepsilon=0.1$.

\begin{figure}
\centering
\includegraphics[width=0.5\textwidth, clip, trim=50 200 50 200]{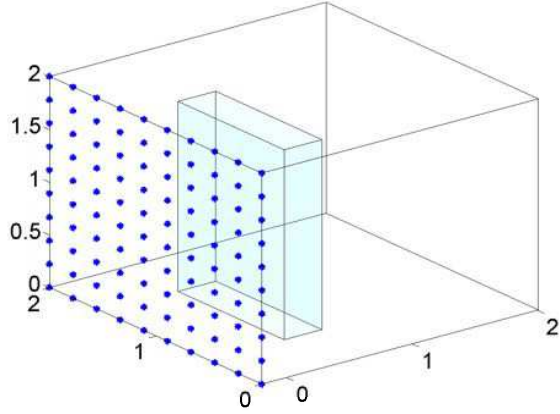}
\caption{\label{fig:geometry} The measurement points and human being's motion domain.}
\end{figure}

The physical quantities are used with SI units. Let $\omega_0=10^5$ Hz, $\mu_0=4\pi \times 10^{-7}$ H/m, $\epsilon_0=8. 85 \times 10^{-12}$ F/m and  the relative  permittivity of $\Omega$ be  $\epsilon_r=81.5$. For simplicity, if not otherwise specified, let $D=[0,\, 2\, {\rm m}]^3$ be divided into a uniformly distributed sampling mesh with dimension $N_h=50\times50\times50$. Assume that the human body moves inside a cubic domain $\Omega=[0.4 \mathrm{m}, 0.7 \mathrm{m}]\times[0.5\, \mathrm{m}, 1.5\, \mathrm{m}]\times[0\, \mathrm{m}, 1.8\, \mathrm{m}]$.
There are 400 uniformly distributed sensors $x_m\in \Gamma, \, m=1,2,\cdots, 400,$ and $N_t$ equidistant time steps $t_n=nT/N_t\in (0,T],\, n=1,2,\cdots,N_t$. The observation surface is set to be $\Gamma=-$0.2\, m$\times$ [0\, m,\, 2\, m]$\times$ [0\, m,\, 2\, m]. See Figure \ref{fig:geometry} for the geometry setting. At last, let $N_m=400$ and $N_t=T/\Delta t$, where the time step $\Delta t$ is set to be 0.1 s.  Assume that the average velocity of the point emitter is 0.5 m/s, hence the step size parameter $\gamma$ is chosen as $0.5\ {\rm m/s} \times\Delta t=0.05$ m.   Heuristically, we would like to remark that in our proposed imaging methods, the parameter $\alpha$ is such chosen that $\alpha\sim c_0^2$, then they can produce fine reconstructions. Thus $\alpha=10^{16}$ is used. Finally, as a convention in the following figures, the blue points denote the measurement points and the blue cubic domain denotes the person's motion domain. 

To show the accuracy of the proposed methods, we also define the discrete relative $L^2$ error between the exact trajectory and the reconstruction by
\begin{equation*}
\frac{\left(\sum_{n=1}^{N_t}|\bm z(t_n)-{\bm z}_n|^2\right)^{1/2}}{\left(\sum_{n=1}^{N_t}|\bm z(t_n)|^2\right)^{1/2}}.
\end{equation*}


\begin{example}\label{exp1}
In the first example, we compare the direct sampling method and the particle filter technique in reconstructing a moving trajectory.
Consider a simplified scenario that a person is wearing an emitter on one of his/her finger and moving the finger to write an  Arabic number ``3", which is modelled by the trajectory:
\begin{equation}\label{eq:number3}
  \bm{z}(t)=\left(1, \, 1.5-\left|\sin\frac{\pi}{5}t\right|,\, 1.8-0.16 t\right), \quad t\in(0,\,10\, {\rm s}].
\end{equation}
\end{example}

\begin{figure}
\centering
\hfill\subfigure[]{\includegraphics[width=0.45\textwidth]{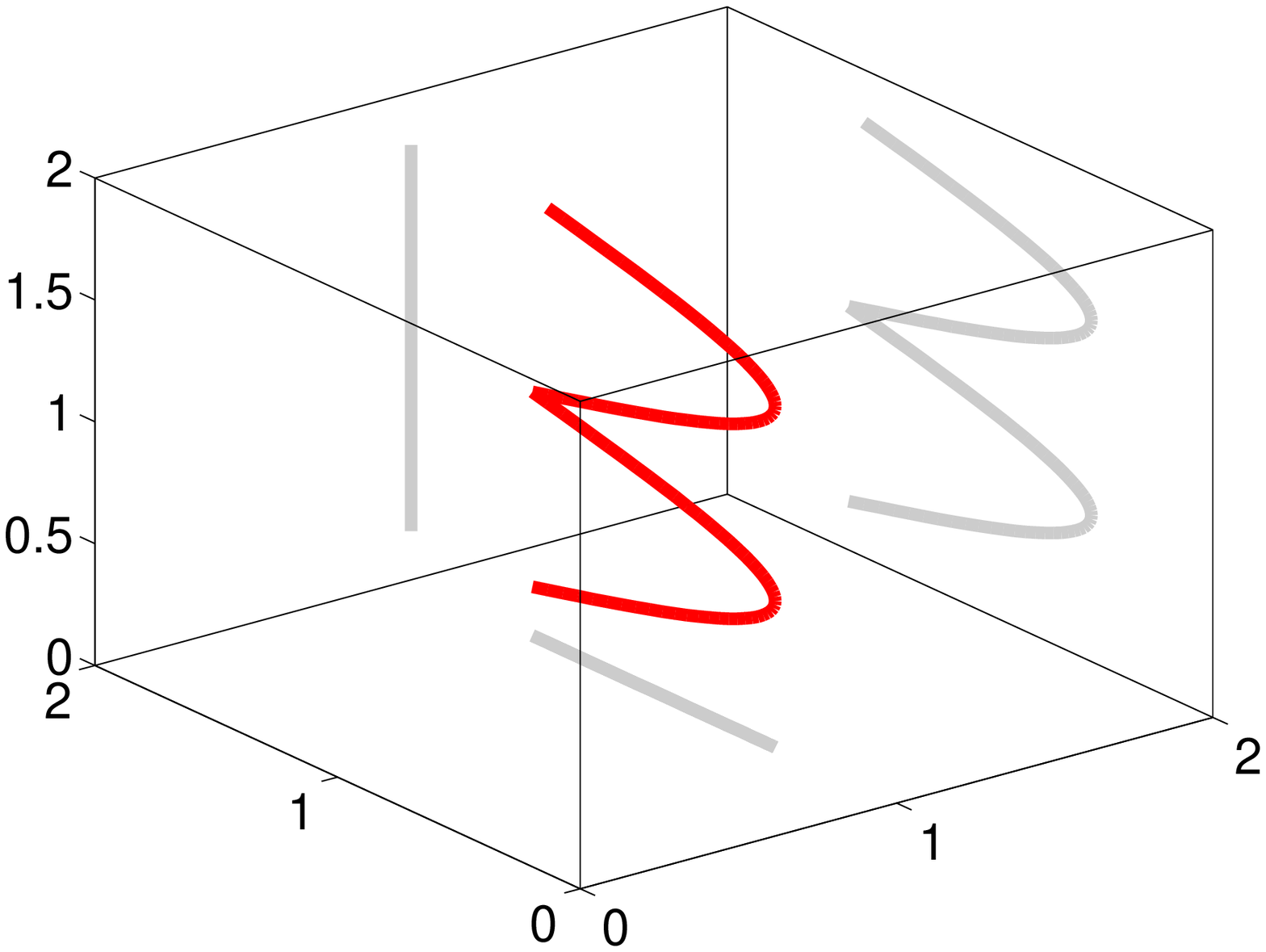}}\hfill 
\hfill\subfigure[]{\includegraphics[width=0.45\textwidth]{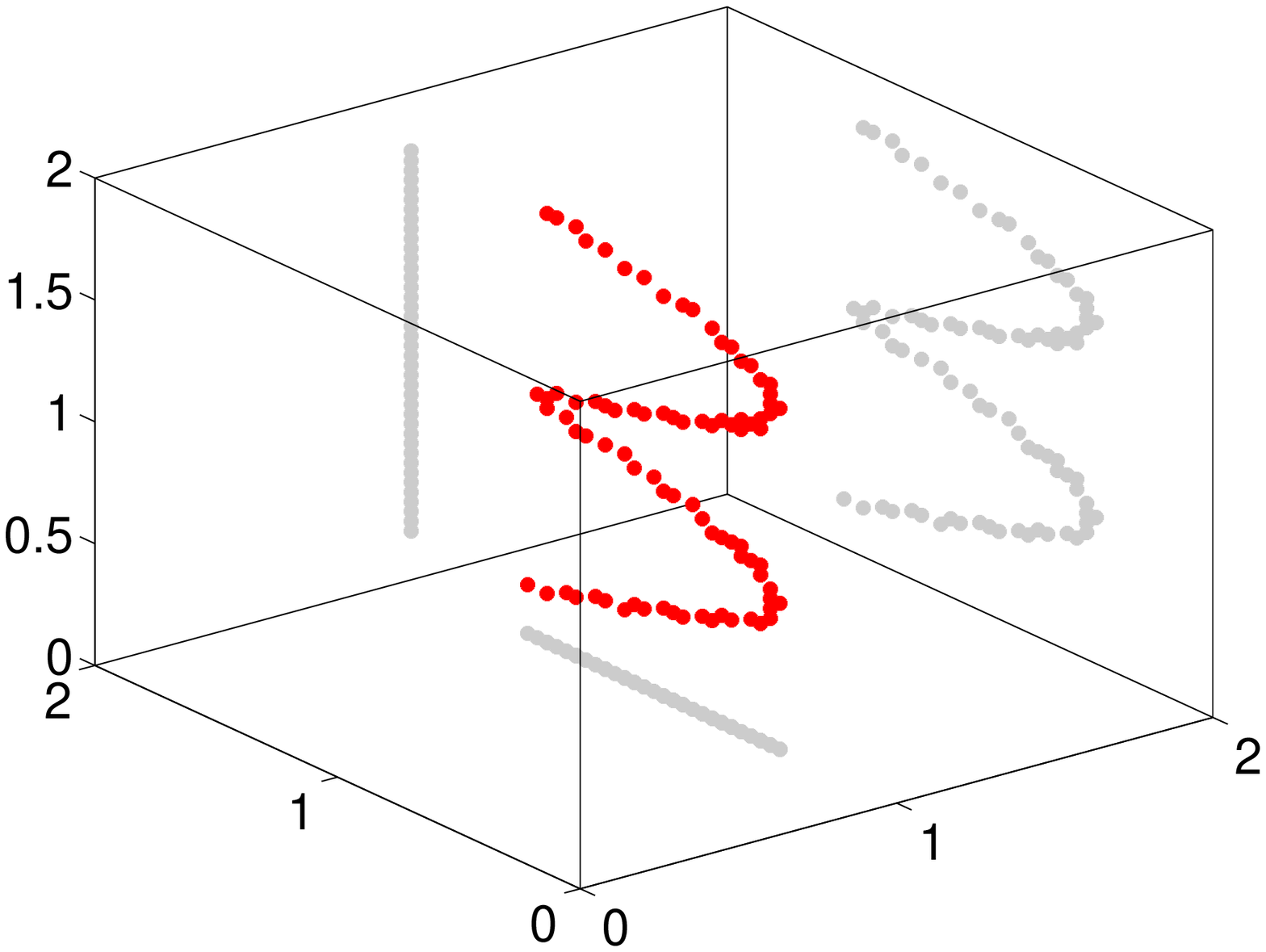}} \hfill \\
\hfill\subfigure[]{\includegraphics[width=0.45\textwidth]{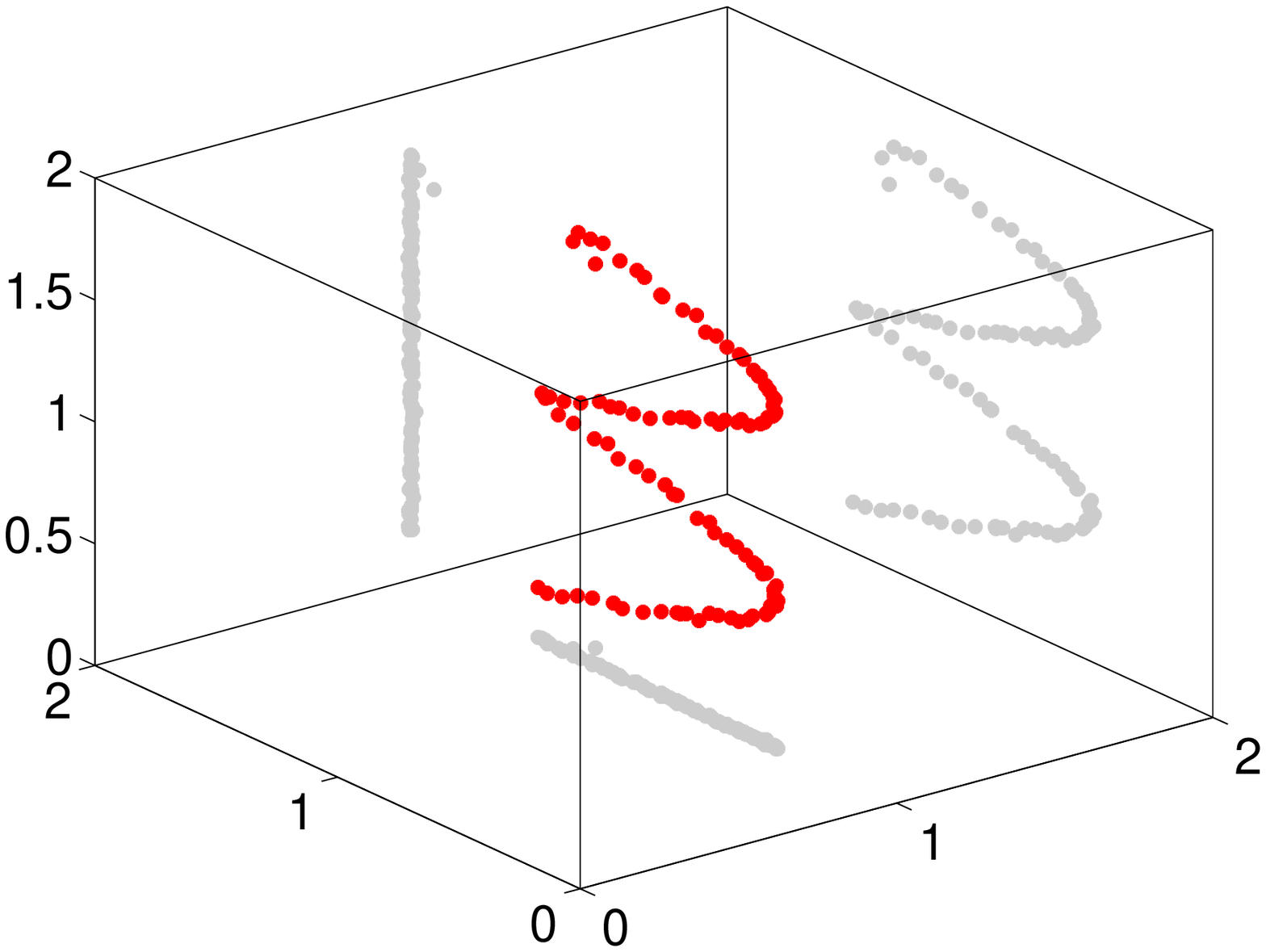}}\hfill 
\hfill\subfigure[]{\includegraphics[width=0.45\textwidth]{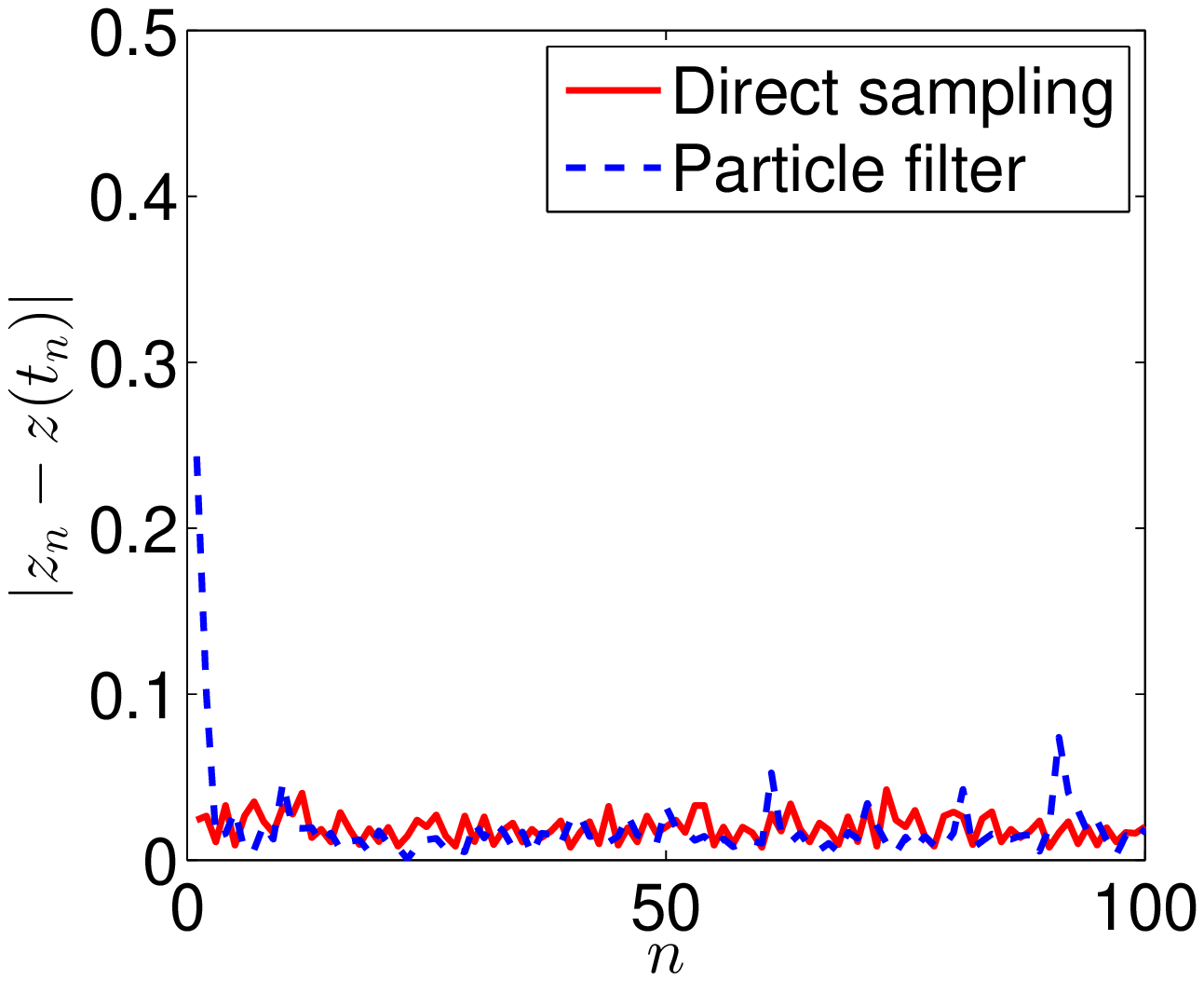}}\hfill
\caption{\label{fig:ex1} Reconstruction of the trajectory ``3".
(a) exact moving trajectory,
(b) reconstruction by the direct sampling method,
(c) reconstruction by the particle filter method with $N_s=100$,
(d) point-wise error between the exact and reconstructed trajectory.}
\end{figure}

The indicating scatter plots of both direct sampling method and particle filter method are shown in Figure  \ref{fig:ex1}.
In the following figures concerning the problem geometry, some 2D projections (shadows with gray color) are also added  in order to facilitate the 3D visualizations.
 Both methods are able to identify the moving trajectory even if the measured data has $10\%$ noise. Through comparing the error between the exact trajectory and the reconstructed trajectory in discrete instants, one can find that the recovery results of both methods have similar error fluctuations, see Figure \ref{fig:ex1}(d).
In order to exhibit the accuracy and computational cost of the reconstructions,  the relative error in discrete $L^2$  norm and the CPU time are listed in Table  \ref{tab1}.
\begin{table}[H]
\center
 \caption{The relative $L^2$ error and the CPU time for reconstructing the trajectory ``3'' ($N_h$ and $N_s$ denote, respectively, the number of sampling points and the number of particle samples).}\label{tab1}
 \begin{tabular}{lccccrc}
  \toprule
 {}&\multicolumn{3}{c}{\underline{\qquad Direct sampling method\qquad}}
 {}&\multicolumn{3}{c}{\underline{\qquad Particle filter technique\qquad}}  \\
 \quad   & $N_h=25^3$  &$N_h=50^3$   & $N_h=100^3$   & $N_s=50$   & $N_s=100$  & $N_s=500$   \\
 \midrule
 error  & 2.09\% & 0.79\%  & 0.68\%   &6.76\%  &2.62\%  &1.59\% \\
 CPU time &16 s     & 115 s    & 897 s  & 0.23 s  & 0.48 s  & 10.5 s  \\
  \bottomrule
 \end{tabular}
\end{table}
\begin{example}\label{exp2}
Reconstruction of a conical spiral. In the previous example, we only study moving trajectory in a plane.  This example is intended to demonstrate the performance of both methods for reconstructing a complicated 3D trajectory. Here the moving trajectory is like an upward conical spiral,
\begin{equation}\label{eq:conical}
  \bm{z}(t)=\left(\,0.05t\cos2t +1,\, 0.05t\sin2t+1,\, 0.1t+0.5\,\right), \quad t\in(0,10\,\rm{s}].
\end{equation}
\end{example}
From equation \eqref{eq:conical}, it is clearly that  the velocity $|\bm v(t)|$ is monotone increasing. As shown in Figure \ref{fig:ex2}(b)--(d),  both methods could produce satisfactory reconstructions.
In addition, Figure \ref{fig:ex2}(e) shows that the particle filter method works better with a higher sampling density, namely, the number of particles. However, the computational  cost increases as the number of particles increases, so a proper number of particles should be considered.

\begin{figure}
\centering
\subfigure[]{\includegraphics[width=0.45\textwidth]{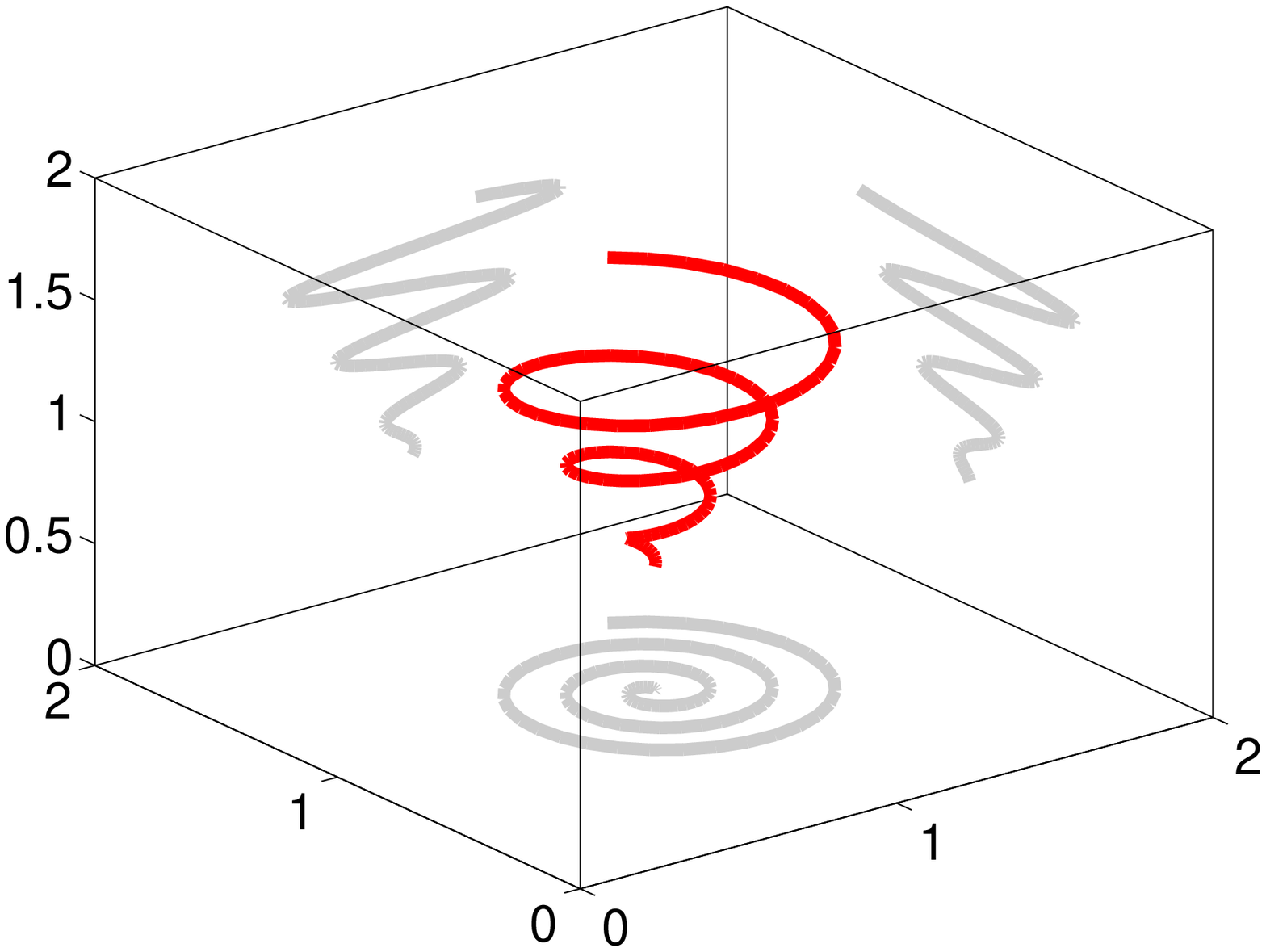}}\\
\hfill\subfigure[]{\includegraphics[width=0.45\textwidth]{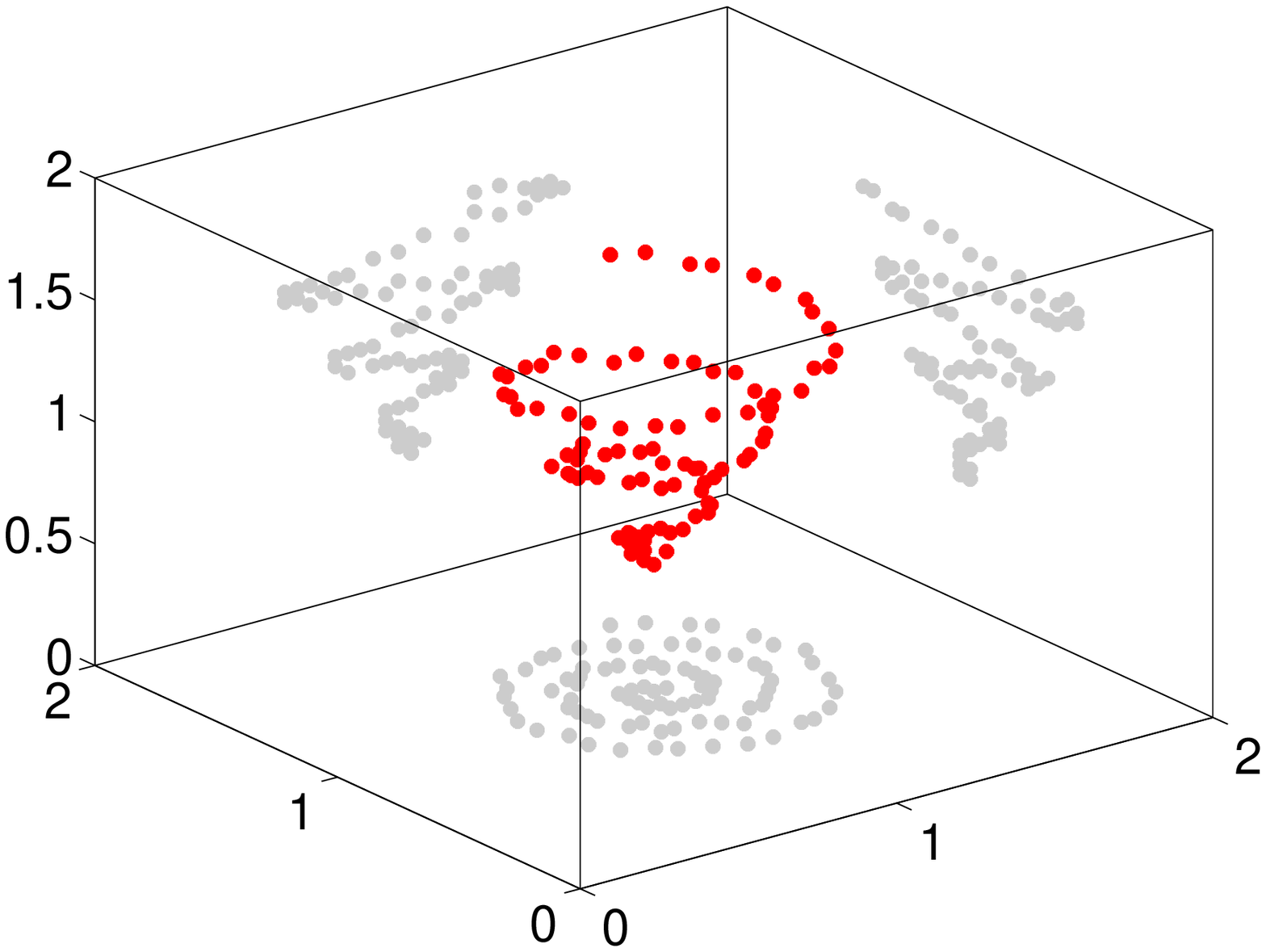}}\hfill
\hfill\subfigure[]{\includegraphics[width=0.45\textwidth]{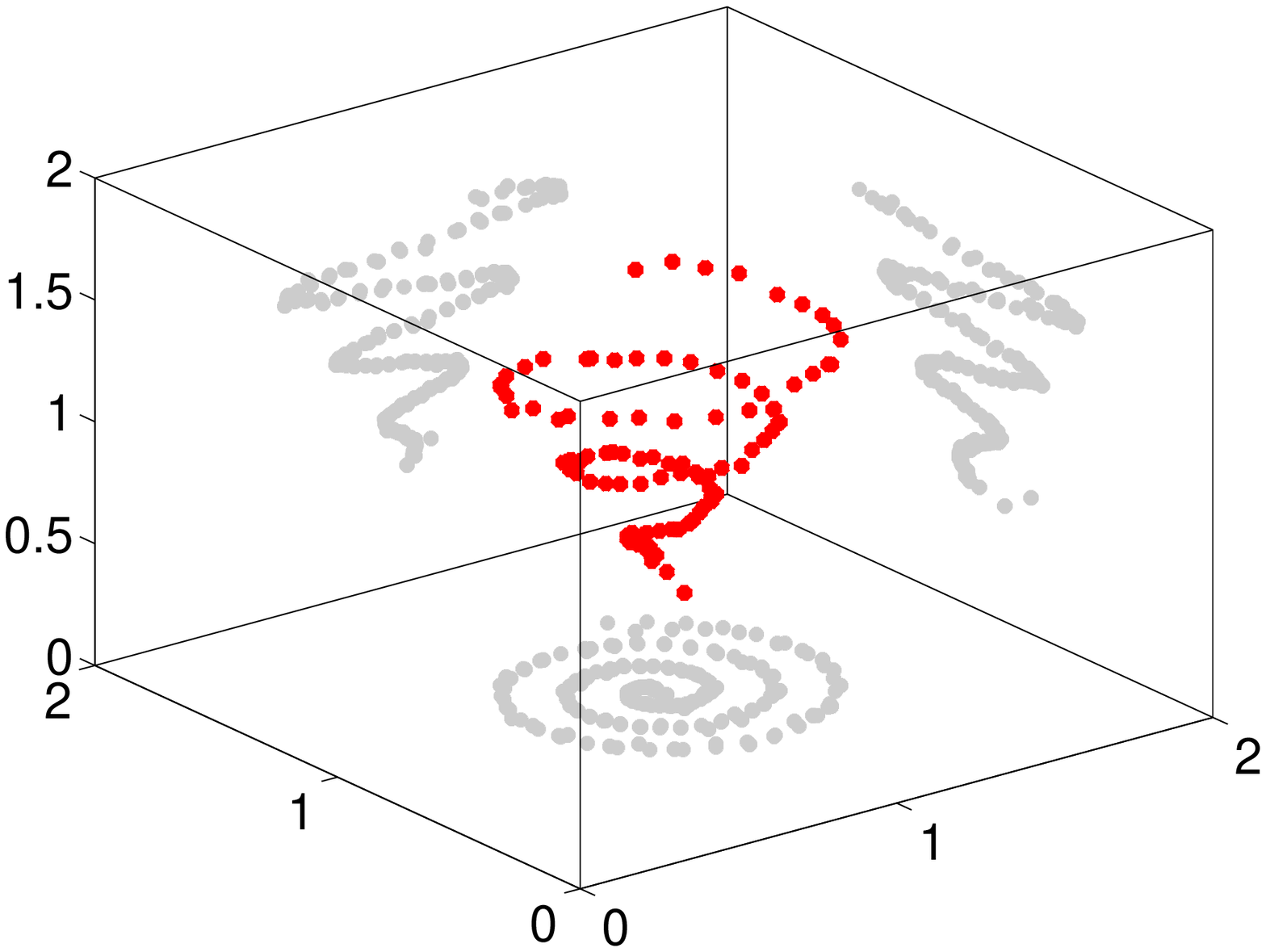}} \hfill\\
\hfill\subfigure[]{\includegraphics[width=0.48\textwidth]{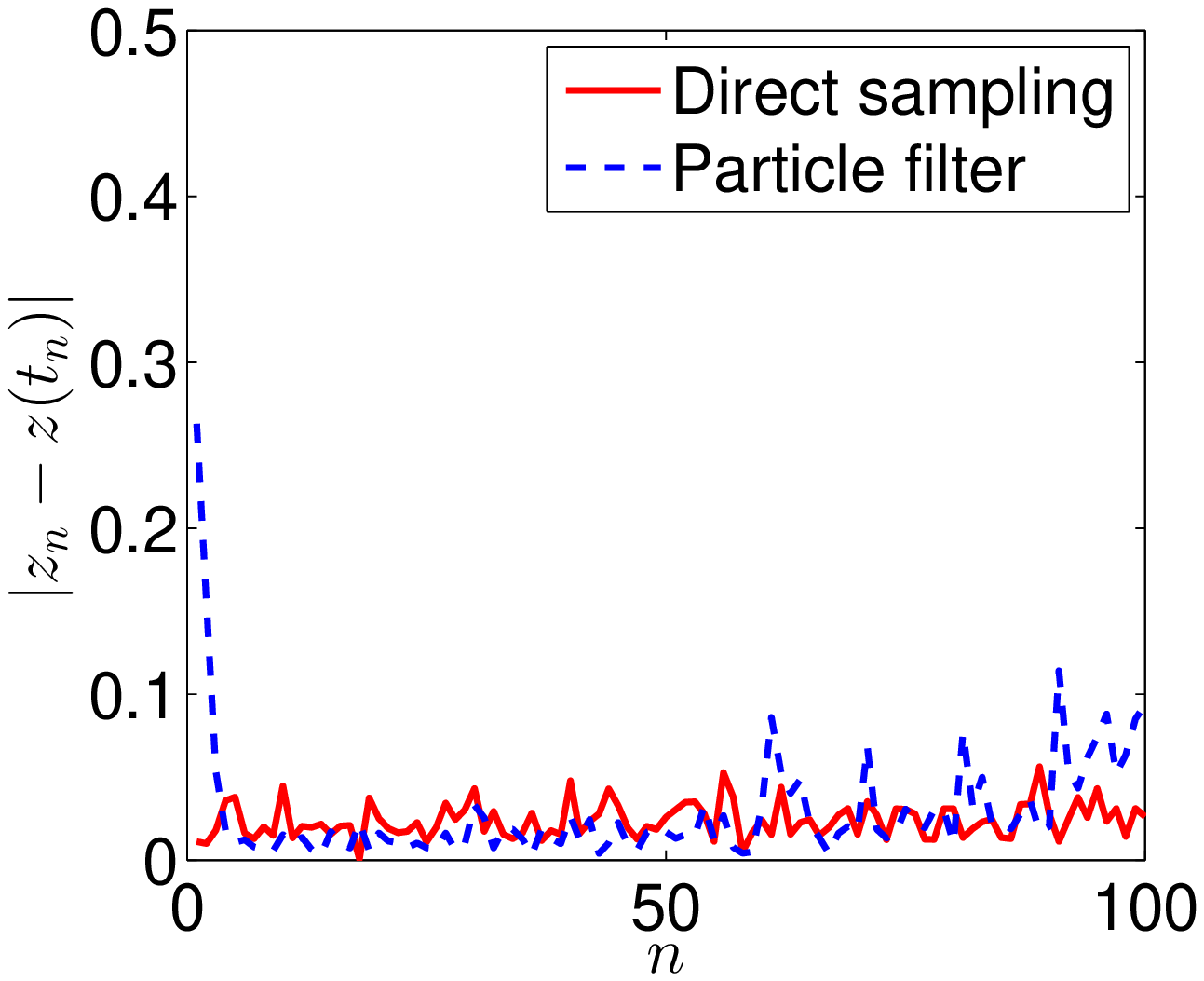}}\hfill
\hfill\subfigure[]{\includegraphics[width=0.4\textwidth]{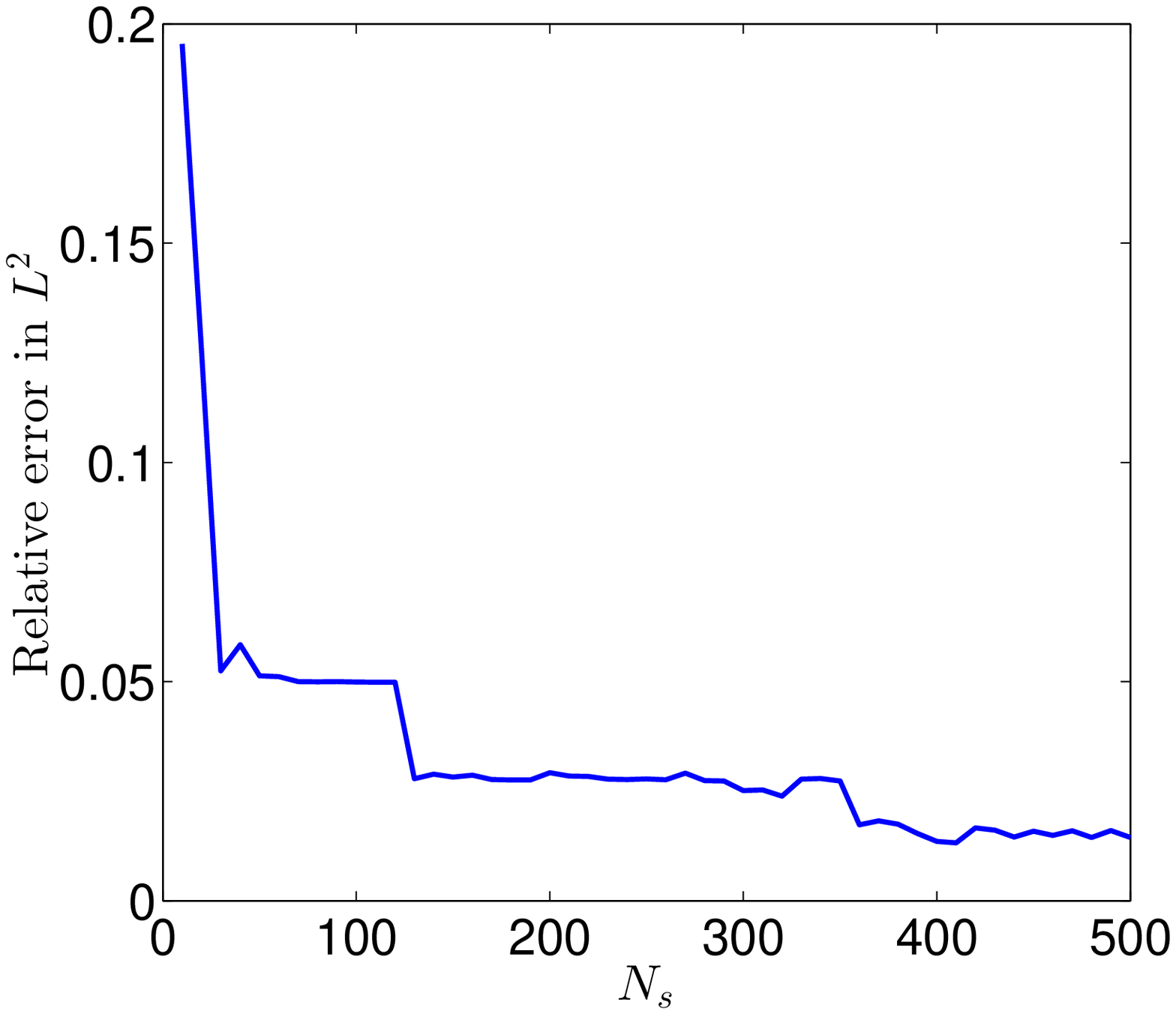}}\hfill
\caption{\label{fig:ex2} Reconstruction of a conical spiral shaped trajectory.
(a) exact moving trajectory,
(b) reconstruction result by the direct sampling method,
(c) reconstruction by the particle filter method with $N_s=200$,
(d) point-wise error between the exact and reconstructed trajectory,
(e) relationship between the number of particles and relative $L^2$ error.}
\end{figure}


\begin{example}\label{exp3}
Reconstruction of the trajectory of a text. The last example is devoted to identification of a motion trajectory consisting of handwriting letters ``OK'', where the moving path is as follows
\begin{equation*}
\bm{z}(t)=\left\{
\begin{aligned}
&  \left(\frac{-3\sqrt{2}}{20}\sin t+\frac{2\sqrt{2}}{5},\, \frac{3\sqrt{2}}{20}\sin t+2-\frac{2\sqrt{2}}{5},\, \frac{3}{5}\cos t+1\right),\quad t\in ( 0,\,6\,\rm{s}],\\
&  \left(1 , \, 1, \, -\frac{t}{2}+\frac{24}{5} \right),\quad t\in (6\,\rm{s},\,9\,\rm{s}],\\
&  \left(\frac{4\sqrt{2}}{25}t+1-\frac{21\sqrt{2}}{125} ,\, -\frac{4\sqrt{2}}{25}t+1+\frac{21\sqrt{2}}{125} ,\,-\frac{2}{5}t+\frac{26}{5} \right),\quad t\in( 9\,\rm{s},\,12\,\rm{s}].\\
\end{aligned}
\right.
\end{equation*}
\end{example}

\begin{figure}
\hfill\subfigure[]{\includegraphics[width=0.45\textwidth]{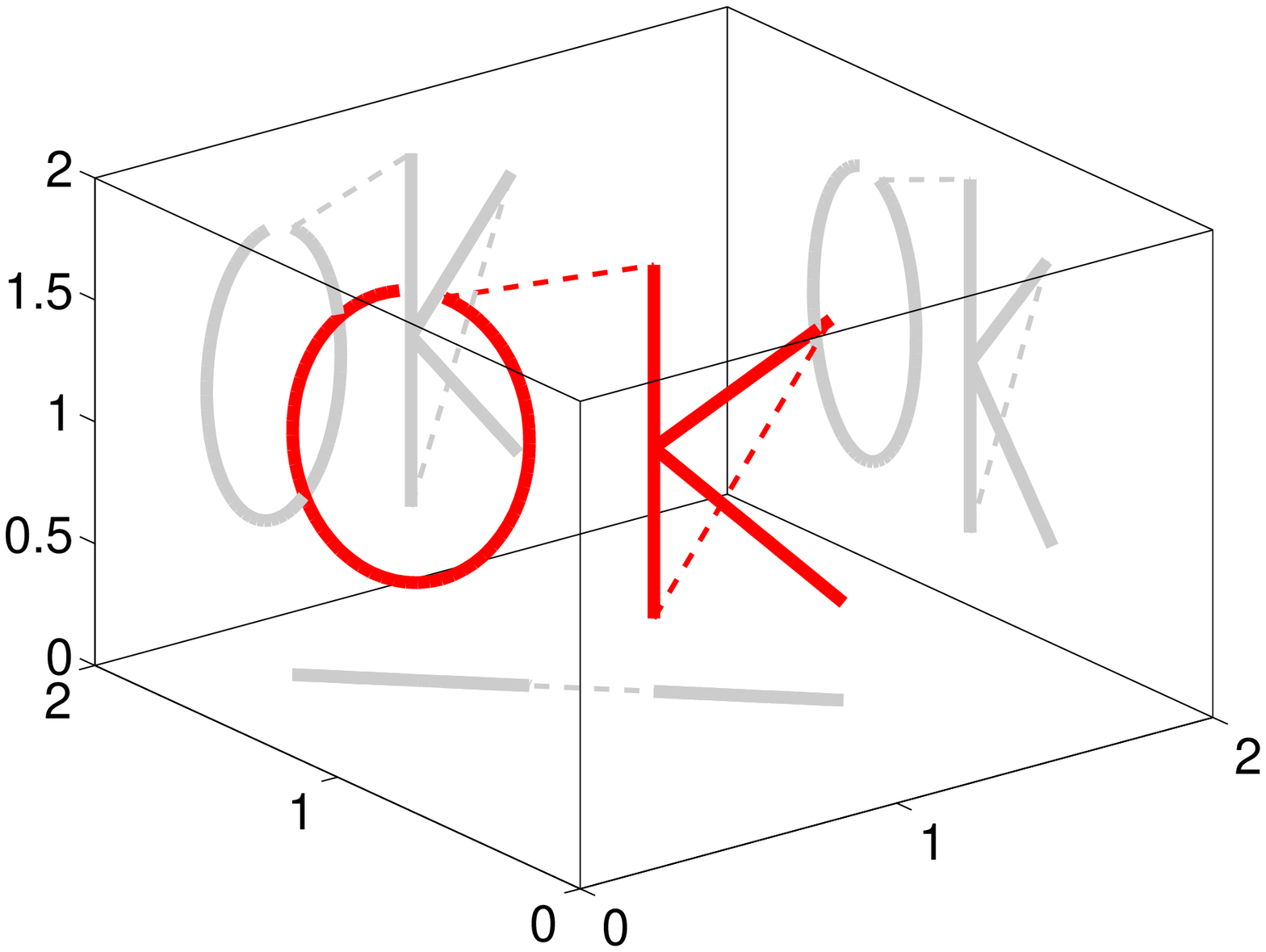}}\hfill
\hfill\subfigure[]{\includegraphics[width=0.45\textwidth]{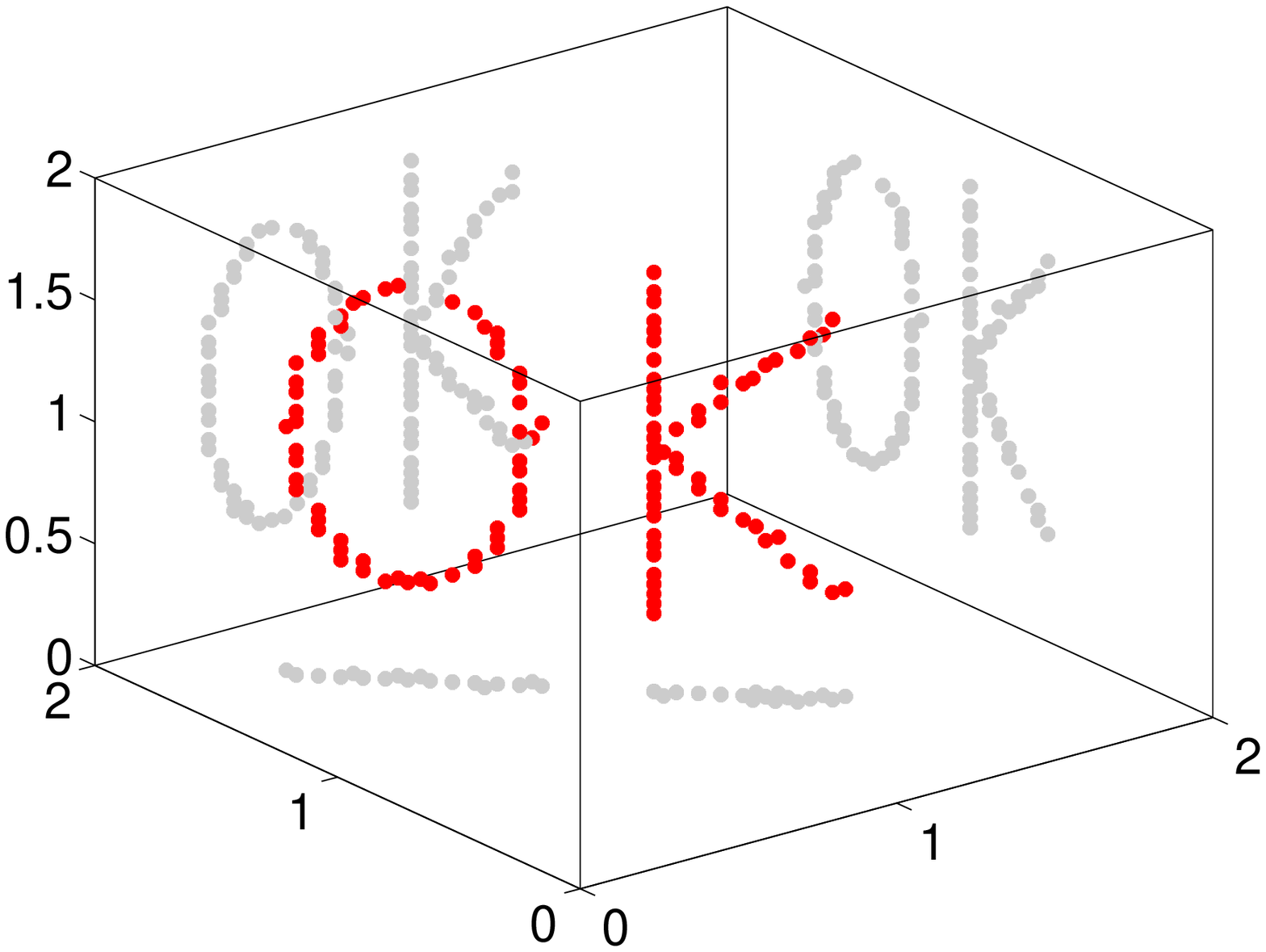}} \hfill\\
\hfill\subfigure[]{\includegraphics[width=0.45\textwidth]{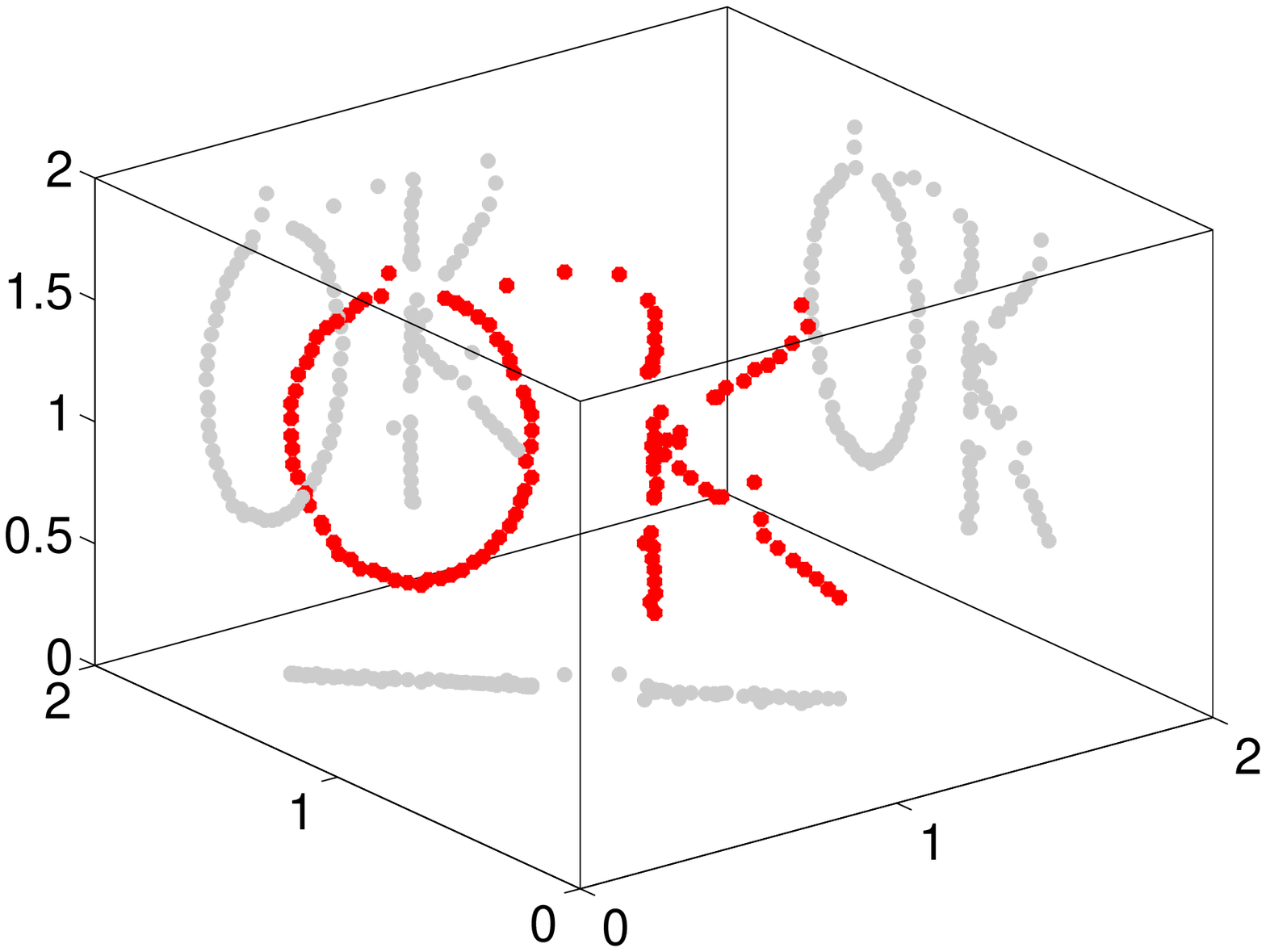}}\hfill
\hfill\subfigure[]{\includegraphics[width=0.45\textwidth]{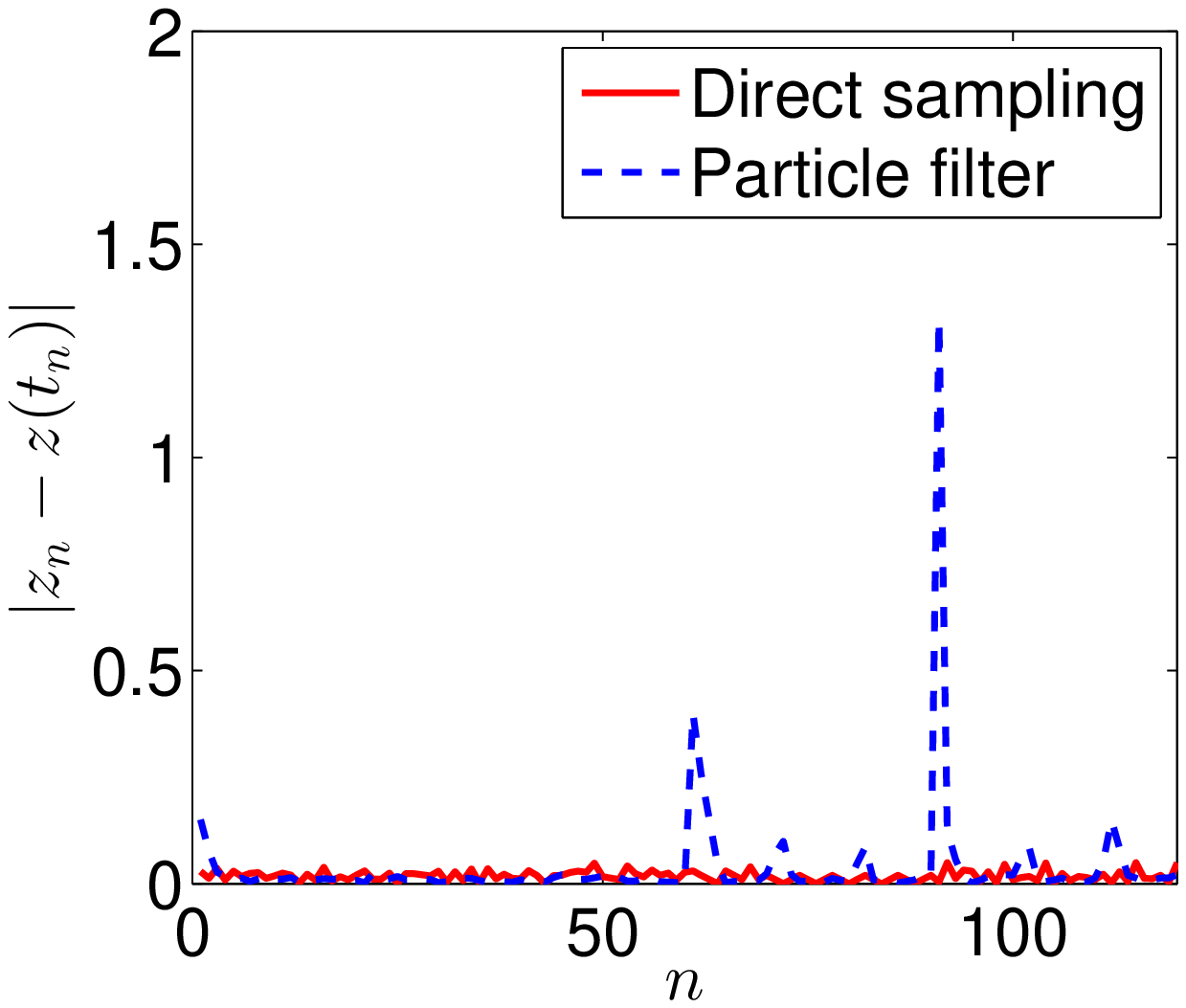}}\hfill
\caption{\label{fig:ex3} Reconstruct the trajectory of a text. (a) exact moving trajectory,
(b) reconstruction by  the direct sampling method, (c) reconstruction by the particle filter method with $N_s=500$, (d) point-wise error between the exact and reconstructed trajectory.}
\end{figure}

\begin{figure}
\hfill\subfigure[]{\includegraphics[width=0.45\textwidth]{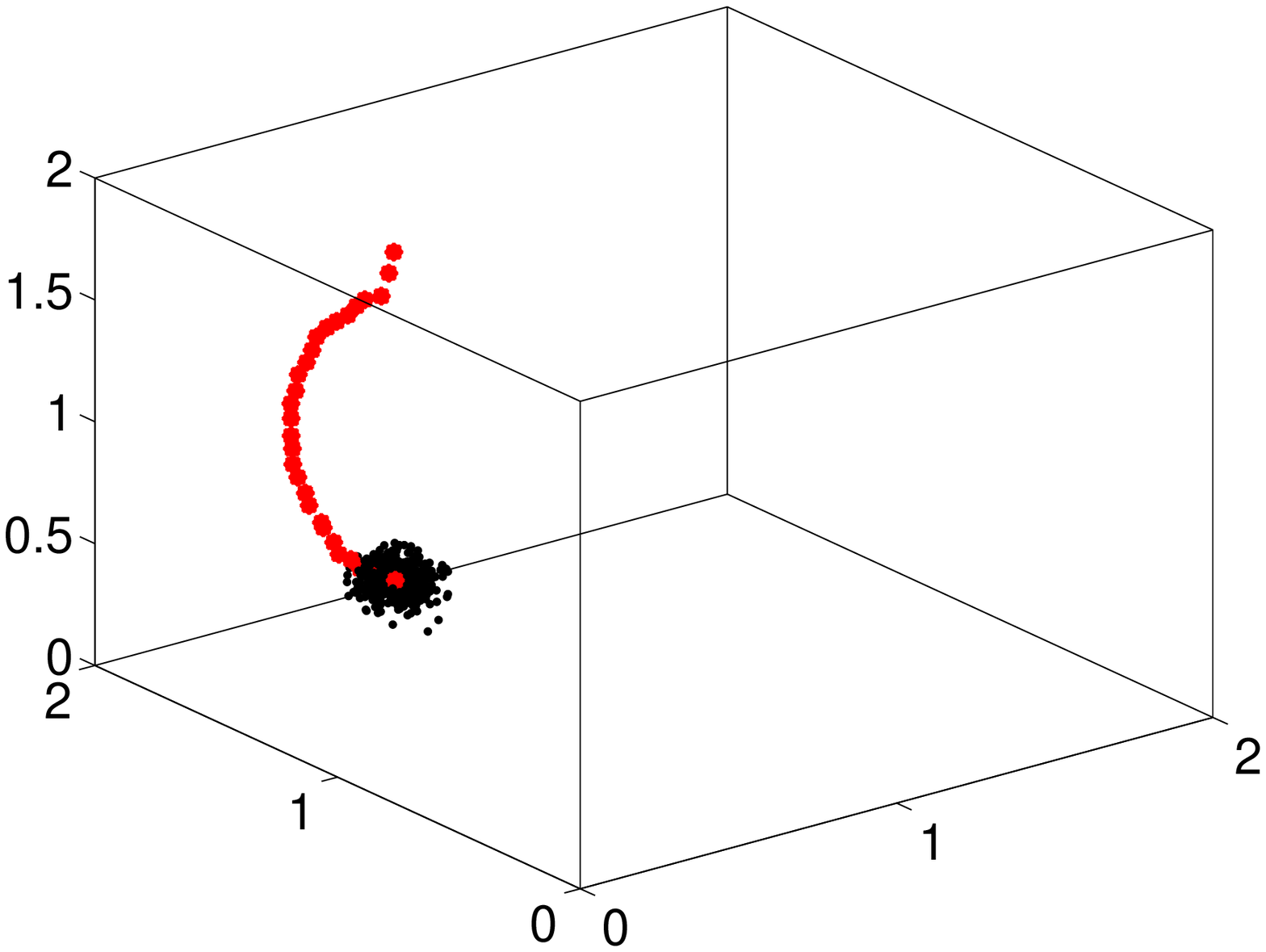}}\hfill
\hfill\subfigure[]{\includegraphics[width=0.45\textwidth]{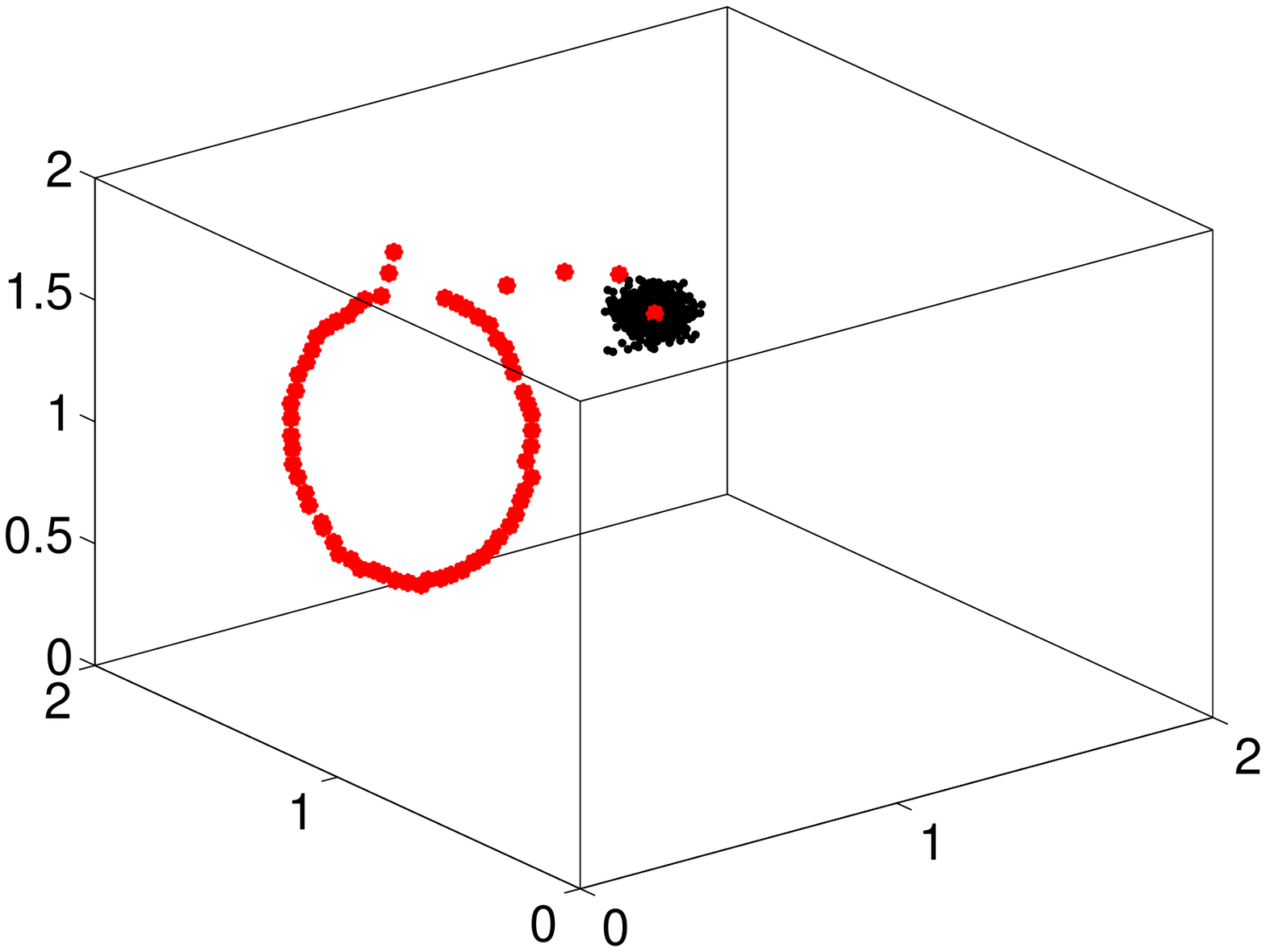}} \hfill\\
\hfill\subfigure[]{\includegraphics[width=0.45\textwidth]{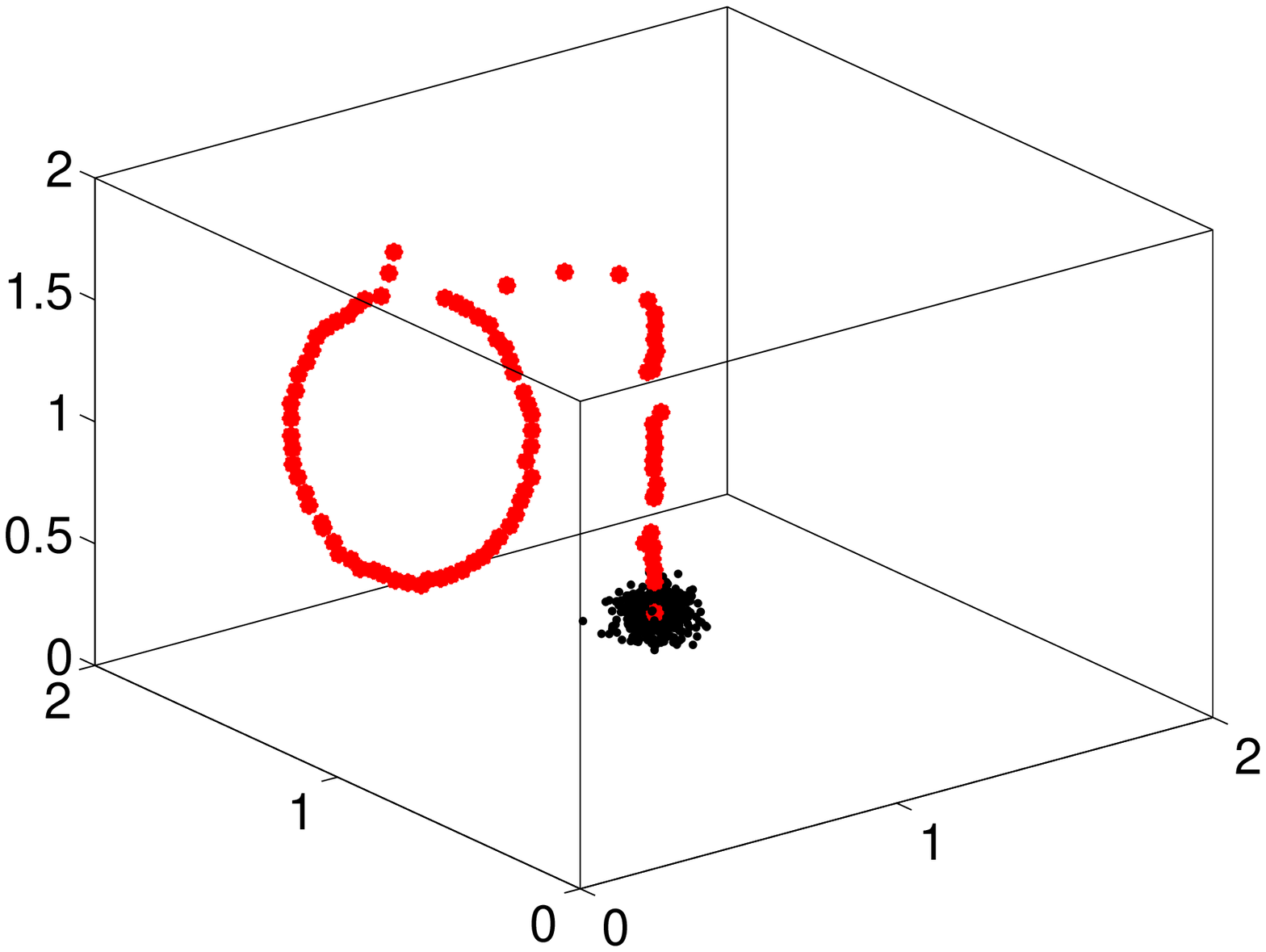}}\hfill
\hfill\subfigure[]{\includegraphics[width=0.45\textwidth]{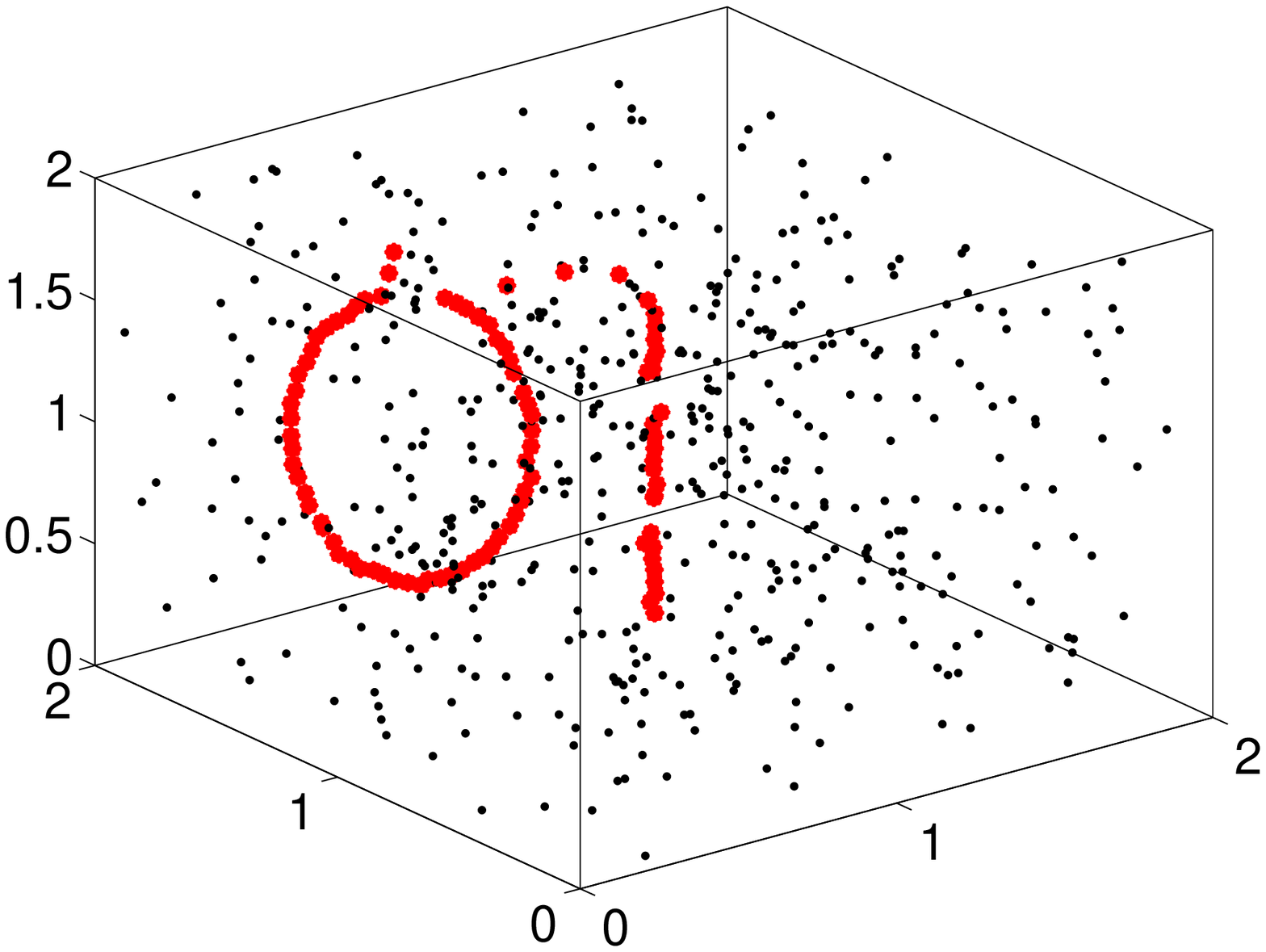}}\hfill\\
\hfill\subfigure[]{\includegraphics[width=0.45\textwidth]{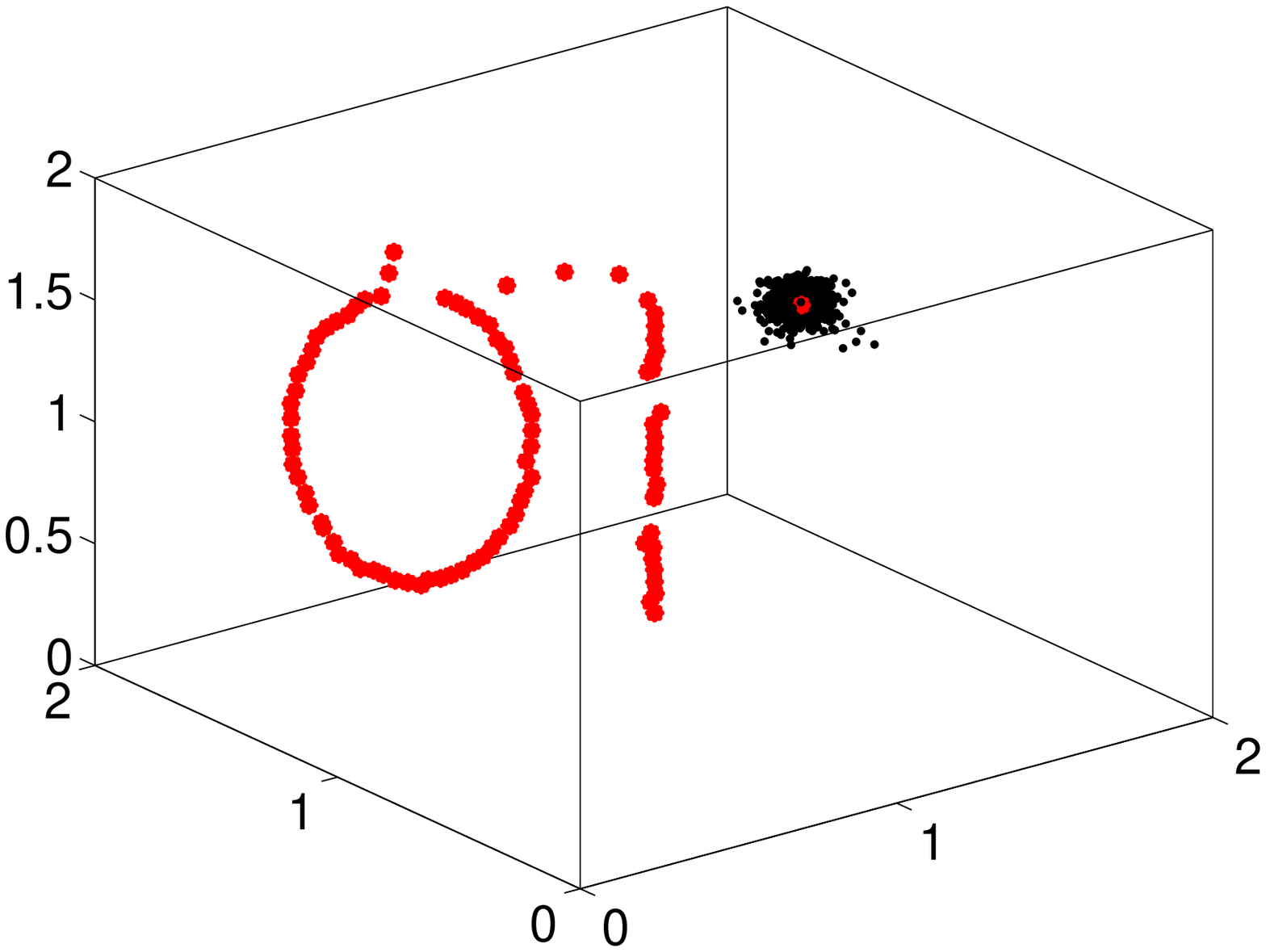}}\hfill
\hfill\subfigure[]{\includegraphics[width=0.45\textwidth]{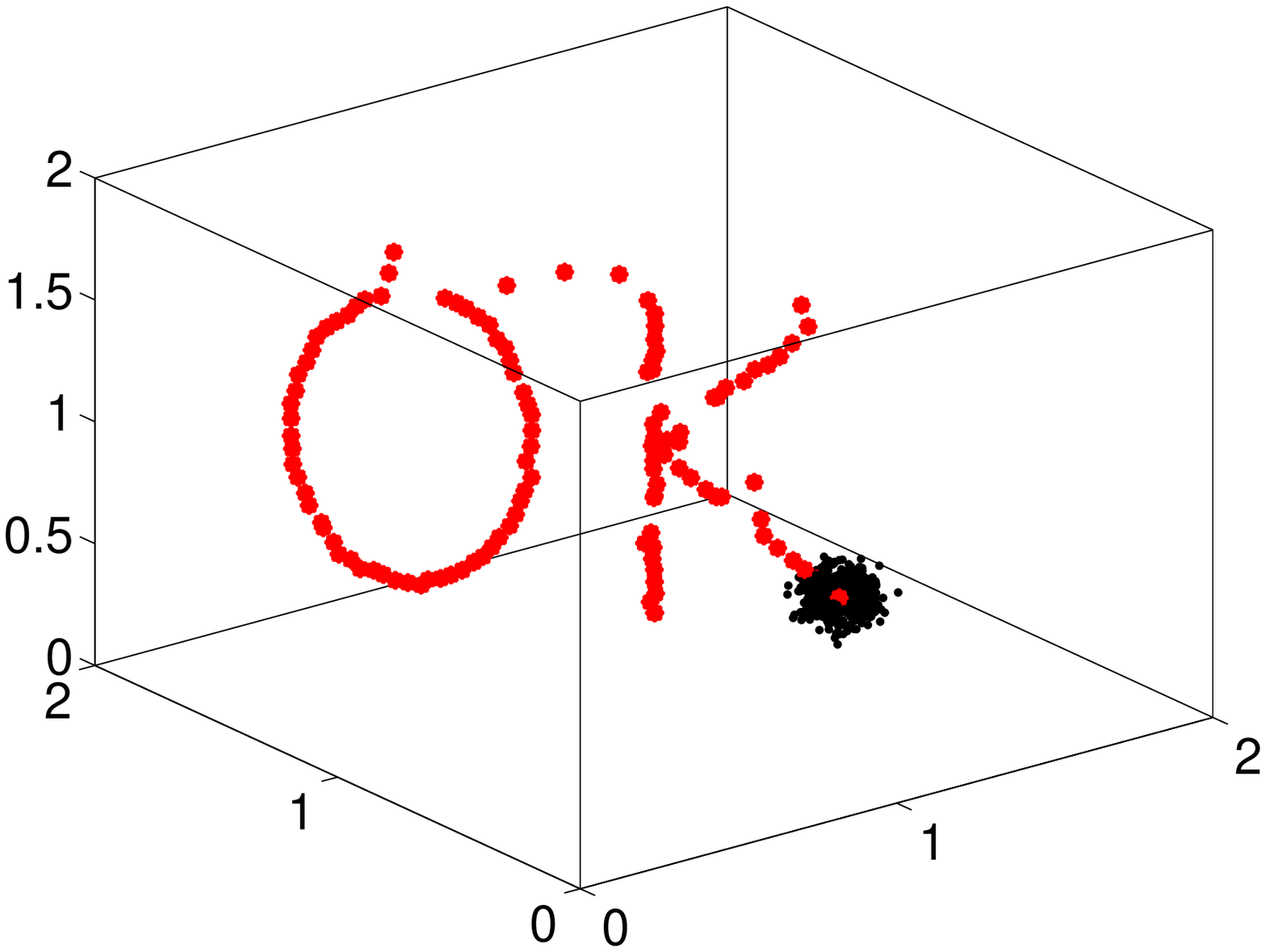}}\hfill
\caption{\label{fig:ex4} Reconstruct the moving trajectory by the particle filter method and snapshots at different instants, where the black points denotes the particles. (a) $t=3.0$ s, (b) $t=6.6$ s, (c) $t=9.0$ s, (d) $t=9.1$ s, (e) $t=9.2$ s, (f) $t=12.0$ s}
\end{figure}

This example is challenging since there exists a sudden skip in the moving trajectory, where the skip distance is far longer than the normal transition step size. The dotted  lines in Figure \ref{fig:ex3}(a) represent the skip trajectory. As discussed in Example \ref{exp1}, the standard deviation $\gamma$ usually  depends on normal transition step size. We note that the particle filter method  will stop  tracking the real trajectory path when there is a sudden skip. An efficient method  is to redistribute the particles such as {\bf Step 3} in {\bf Algorithm 2}, see Figure \ref{fig:ex4}(d). Figure \ref{fig:ex4} illustrates that the particle filter method could recover the trajectory  after redistribution. However, in Figure \ref{fig:ex3}(d), there exists large perturbation  at the skip instants, this is due to the shortcoming of the particle filter method.

\begin{rem}
In this paper, the reconstructed trajectories are only in the discrete form. To obtain a smooth curve as the reconstructed trajectory,  one could resort to the truncated Fourier approximation for post-processing the discrete points $\{z_n\}_{n=1}^{N_t}$\cite{GLLW}. The implementation of the direct sampling method could also be accelerated by incorporating the sequential or parallel local sampling strategies, see \cite{GLLW} for more details.
\end{rem}


\section{Conclusions}

In this work, we develop a conceptual framework of the novel gesture recognition techniques using electromagnetic waves. The model problem is formulated as an inverse source problem of determining the moving trajectory of a moving point charge. Two methods are proposed to deal with the inverse problem, respectively, in the deterministic and statistical manners. Mathematical justifications are presented and extensive numerical examples are provided to validate the effectiveness and efficiency of the methods.

We would like to remark that this work is mainly from a theoretical point of view and the physical and engineering realizations are beyond the scope of our current study.


\section*{Acknowledgments}

The work of Y. Guo was supported by the NSF grant of China (No.\,11601107, 11671111, 41474102). The work of J. Li was supported by the NSF grant of China  (No.\,11571161), the Shenzhen Sci-Tech Fund (No.\,JCYJ20160530184212170) and the SUSTech Startup fund.  The work of H. Liu was supported by the startup fund and FRG grants from Hong Kong Baptist University, Hong Kong RGC General Research Fund, No.\,12302415. The work of X. Wang was supported by the NSF grant of China (No.\, 11671113).


\appendix
\section{Appendix}\label{App}

In this appendix, for self-containedness and convenience of readers,  we collect some basic ingredients of the particle filter method. We also refer to \cite{ADP, CD, KS} and the references therein for more relevant details.

Define
 $\{X_t\}_{t=0}^{\infty}$ and $\{Y_t\}_{t=1}^{\infty}$ by  two stochastic processes, where $X_t\in \mathbb{R}^{n_x}$ represents  system state and  $Y_t\in \mathbb{R}^{n_y}$ represents the measurement at the $t$-th time instant.
 Let us consider the following time discrete dynamic model,
 \begin{align*}
 & X_{t}=f(X_{t-1})+V_{t},\\
 & Y_t=g(X_t)+W_t,
 \end{align*}
where $f:\mathbb{R}^{n_x}\mapsto \mathbb{R}^{n_x} $ and $g:\mathbb{R}^{n_y}\mapsto \mathbb{R}^{n_y}$ are  two  measurable functions.
 $V_t\in \mathbb{R}^{n_x}$ denotes the process noise and $W_t \in \mathbb{R}^{n_y}$ denotes the measurement noise.
 Let $p_V$ be a  density distribution of $V_t$ and $p_W$ be a  density distribution of $W_t$ and define
\begin{equation}\label{eq:pro1}
\begin{aligned}
 & p(x_t\mid  x_{t-1}):=p_V(x_t- f(x_{t-1})),\\
 & p(y_t\mid  x_t):=p_W(y_t- g(x_t)).
\end{aligned}
\end{equation}
 The optimal nonlinear filtering problem is to find the posterior probability density function $p(x_t\mid y_{1:t})$ at the state $x_t$ from the observation data $y_{1:t}:=\{y_1,\cdots,y_t\}$. Here  the posterior probability density function is given by Bayes' formula \cite{CD}:
   \begin{equation}\label{eq:PDF}
   p(x_t\mid y_{1:t})=\frac{p(y_t\mid x_t)p(x_t\mid y_{1:t-1})}{\int_{\mathbb{R}^{n_x}} p(y_t\mid x_t) p(x_t\mid y_{1:t-1})\, \mathrm{d} x_{t}},
 \end{equation}
 where
 \begin{equation*}
   p(x_t\mid y_{1:t-1})=\int_{\mathbb{R}^{n_x}} p(x_t\mid x_{t-1}) p(x_{t-1}\mid y_{1:t-1})\, \mathrm{d} x_{t-1}.
 \end{equation*}

The key idea behind the particle filter method is to use a set of samples with weight to approximate the posterior probability density function in \eqref{eq:PDF}.
Given $N$ random particles $\{x_t^{(i)} \}_{i=1}^N$, correspondingly,
the posterior probability density function  with  associated weights $\{w_t^{(i)} \}_{i=1}^N$ could be represented by
\begin{equation}\label{eq:mesure}
 p^N( x_t\mid y_{1:t}):=\frac{1}{\mathrm{d}x_t}\sum\limits_{i=1}^N w_t^{(i)} \delta_{x_t^{(i)}}(\mathrm{d}x_t),
\end{equation}
where $\delta_x$ denotes the delta-Dirac mass located at $x$ and
\begin{equation}\label{eq:weights}
   w_t^{(i)}=\frac{ w_{t-1}^{(i)}p(y_t \mid x_t^{(i)} )}{\sum\limits_{i=1}^N  w_{t-1}^{(i)} p(y_t\mid x_t^{(i)})}.
\end{equation}
Equation \eqref{eq:weights} indicates that it is possible that only one particle has a significant weight value after several recursive steps. A resampling stage \cite{CD} allows to replace the samples with low weights by copies of the samples with high weights, which ensures more particles in statistically significant areas.
The classical particle filter algorithm proceeds in three main steps,  see Algorithm PF.

\begin{table}[htp]
\centering
\begin{tabular}{cp{.8\textwidth}}
\toprule
\multicolumn{2}{l}{{\bf Algorithm PF:}\quad The classical particle filter method.}\\
\midrule
 {\bf Step 1} &  Initialization. For $t = 0$, sample
 \[
 x_0^{(i)}\sim p(x_0),\quad i=1,\cdots, N,
 \]
and set $t=1$. \\
{\bf Step 2} & Importance sampling. For $t\geq 1$, sample
\[
\tilde{x}_t^{(i)}\sim p(x_t\mid x_{t-1})\, p^N( x_{t-1}\mid y_{1:t-1}),\quad i=1,\cdots, N,
\]
and evaluate the normalized importance weights
\[
w_t^{(i)}= \frac{p(y_t\mid \tilde{x}_t^{(i)})}{\sum\limits_{i=1}^N  p(y_t\mid \tilde{x}_t^{(i)})}, \quad i=1,\cdots, N.
\]
\\
{\bf Step 3} & Resampling. Sample
\[
x_t^{(i)}\sim p^N(x_t\mid y_{1:t}),\quad i=1,\cdots, N,
\]
where the posterior probability density function is 
\[
p^N(x_t\mid y_{1:t})=\frac{1}{\mathrm{d}x_t}\sum\limits_{i=1}^N  w_t^{(i)} \delta_{\tilde{x}_t^{(i)}}(\mathrm{d}x_t), \quad i=1,\cdots, N.\]
\\
\bottomrule
\end{tabular}
\end{table}


\end{document}